\def\BibTeX{{\rm B\kern-.05em{\sc i\kern-.025em b}\kern-.08em
    T\kern-.1667em\lower.7ex\hbox{E}\kern-.125emX}}
\theoremstyle{remark}
\newtheorem{lemma}{\bfseries\textbf{Lemma}}
\newtheorem{definition}{\bfseries\textbf{Definition}}
\newtheorem{assumption}{\bfseries\textbf{Assumption}}
\newtheorem{theorem}{\bfseries\textbf{Theorem}}
\newtheorem{corollary}{\bfseries\textbf{Corollary}}[theorem]
\begin{document}

\title{RHFedMTL: Resource-Aware Hierarchical Federated Multi-Task Learning}

\author{Xingfu Yi, Rongpeng Li, Chenghui Peng, Fei Wang, Jianjun Wu, and Zhifeng Zhao

\thanks{X. Yi and R. Li are with Zhejiang University, Hangzhou, China
(email: \{yixingfu, lirongpeng\}@zju.edu.cn).}
\thanks{C. Peng, F. Wang, and J. Wu are with Huawei Technologies Co., Ltd.
(email: \{pengchenghui, wangfei76, wujianjun\}@huawei.com).}
\thanks{Z. Zhao is with Zhejiang Lab, Hangzhou, China
(email: zhaozf@zhejianglab.com).}
\thanks{Part of the paper has been accepted by IEEE PIMRC 2022 \cite{yi_hfedmtl_2022}.}
}
\maketitle

\begin{abstract}
    The rapid development of artificial intelligence (AI) over massive applications including Internet-of-things on cellular network raises the concern of technical challenges such as privacy, heterogeneity and resource efficiency.
    Federated learning is an effective way to enable AI over massive distributed nodes with security.
    However, conventional works mostly focus on learning a single global model for a unique task across the network, and are generally less competent to handle multi-task learning (MTL) scenarios with stragglers at the expense of acceptable computation and communication cost. Meanwhile, it is challenging to ensure the privacy while maintain a coupled multi-task learning across multiple base stations (BSs) and terminals. In this paper, inspired by the natural cloud-BS-terminal hierarchy of cellular works, we provide a viable resource-aware hierarchical federated MTL (RHFedMTL) solution to meet the heterogeneity of tasks, by solving different tasks within the BSs and aggregating the multi-task result in the cloud without compromising the privacy. Specifically, a primal-dual method has been leveraged to effectively transform the coupled MTL into some local optimization sub-problems within BSs. Furthermore, compared with existing methods to reduce resource cost by simply changing the aggregation frequency,
    we dive into the intricate relationship between resource consumption and learning accuracy, and develop a resource-aware learning strategy for local terminals and BSs to meet the resource budget.
    Extensive simulation results demonstrate the effectiveness and superiority of RHFedMTL in terms of improving the learning accuracy and boosting the convergence rate.
\end{abstract}
\begin{IEEEkeywords}
    Federated Learning, Multi-Task Learning, Mobile Edge Computing, Artificial intelligence (AI)
\end{IEEEkeywords}

\section{Introduction}
    Benefiting from the rapid development of cellular networks, artificial intelligence (AI) over massive Internet-of-things (IoT) or smart phones
    becomes possible, and there is coming to a consensus that 6G will revolutionize as a network of AI \cite{6Gvisions, intelligent5G, collective, management}.
    Meanwhile, network AI in 6G should simultaneously allow various sensing, mining, prediction and reasoning tasks across different industries and there emerges a common concern for storing and transmitting the rapidly expanded data through network in the near future~\cite{fogIot}.
    Besides, the heterogeneous and complex nature of Industrial IoT (IIoT) systems has brought together many technical challenges such as privacy, heterogeneity and resource efficiency~\cite{IIot}.
    
    In that regard, federated learning (FL) promises to enable AI over distributed terminals with privacy~\cite{fedavg} and has been extensively used in many scenarios~\cite{Keyboard,machine, wireless}.
    In particular, federated learning learns a shared model by aggregating local updates from terminals while keeping the training data on the terminals for the sack of privacy~\cite{fedavg}. Meanwhile, multi-task learning (MTL), which capably keeps more diversity with respect to models, can significantly improve the learning accuracy and the utilization of data~\cite{mtl,mocha}.
    However, conventional FL works mostly focus on learning a single unified model for one unique task across the network, and are generally less competent to handle MTL scenarios with stragglers at the expense of acceptable computation and communication cost.
    For example, \cite{adaptive} proposes to adjust the frequency of global aggregation so as to better leverage the available resource,
    but it only focuses on learning a single global model and cannot adjust the learning process of local terminals.
    \cite{energyefficient} proposes an energy-efficient computation and transmission resource allocation scheme over wireless communication networks while it does not support MTL and neglects the hierarchical network structure.
    \cite{TimeOptimization} proposes a probabilistic user selection scheme to reduce the FL convergence time and the FL training loss.
    \cite{convergence} proposes a resource allocation algorithm over
    wireless networks to capture the trade-off between the wall clock training time and the terminal energy consumption. Still, it focuses on a single-task scenario without considering the hierarchical structure. 
    As for the resource utilization optimization of FL in wireless networks, \cite{fedwireless} mainly derives an analytical model to characterize the performance of FL, while\cite{Differential} proposes a private gradient aggregation scheme for wireless federated learning.
    \cite{powercontrol} studies adaptive power allocation for distributed gradient descent.
    Though \cite{LearningAndCommunications} provides a comprehensive study of the connection between the performance of FL algorithms and the
    underlying wireless network, it only talks about the energy consumption of each user into consideration and cannot adjust the energy consumption of the complete learning process. 
    Foremost, within the scope of federated learning, although Stochastic Gradient Descent (SGD) has become popular for solving supervised machine learning optimization problems~\cite{fedavg,adaptive}, the coupling of multiple learning tasks makes it difficult to maintain the data privacy under the conventional FL framework.
    
    In this paper, we focus on a resource-aware hierarchical federated multi-task learning (RHFedMTL) framework, so as to perform federated MTL for large-scale data with limited resource budget and without compromising the privacy.
    Notably, Stochastic Dual Coordinate Ascent (SDCA) gives stronger convergence results than the primal-only methods (e.g., SGD), at the same iteration cost~\cite{sdca},
    and previous works of federated MTL such as CoCoA~\cite{cocoa,cocoa2} use the state-of-the-art primal-dual framework with SDCA. However, it only focuses on distributed optimization method and cannot handle the unique systems challenges (e.g., stragglers) of the federated environment~\cite{mocha}.
    MOCHA~\cite{mocha} attempts to present the federated multi-task learning and shows that MTL is a natural choice to handle statistical challenges in the federated setting.
    However, one task in MOCHA corresponds to one terminal, which becomes computation-intensive and even impractical, since the required computation resources grow exponentially along with the increase of the number of terminals.
    Meanwhile, in practice, it is quite common to allocate multiple terminals within one region (e.g., one BS) to perform one task together.
    Furthermore, both CoCoA and MOCHA neglect the technical challenges such as hierarchical resource management.
    Though~\cite{clientedge} introduces a client-edge-cloud hierarchical federated learning algorithm with multiple edge servers performing partial model aggregation to reduce communication cost, 
    it does not support MTL and thus is unable to cope with unique systems challenges (e.g., stragglers) of the FL environment.
    Besides, the resource management of previous works are static (i.e., not resource-aware) and cannot adjust the local learning process of terminals to balance the trade-off between resource cost and learning performance~\cite{strategies,efficient,adaptive,clientedge}.
    
    Therefore, inspired by the natural cloud-BS-terminal hierarchy of cellular works, we propose a resource-aware hierarchical federated MTL (RHFedMTL) method, where each BS is responsible for training one learning task with the aid of attached terminals while the cloud takes charge of aggregating the multi-task results. Furthermore, benefiting from primal-dual optimization method, 
    we transform the global primal optimization problem into separate dual sub-problems across BSs, so as to guarantee the privacy.
    Moreover, such a problem-splitting lays the very foundation for further developing a resource-aware learning strategy, as the proved relationship between the learning accuracy and the number of iterations as well as the trade-off between the terminal iteration and the BS iteration.
    Compared with the existing works, our main contributions in this paper are summarized as follows:
    \begin{enumerate}   
        \item
        We leverage a hierarchical structure for federated MTL on top of the primal-dual method SDCA, and propose a RHFedMTL method, which boosts the flexibility of MTL and enables the learning of different specified models over massive terminals without compromising the privacy.
        \item
        We provide the derived convergence bound of the vanilla HFedMTL, and unveil
        the relationship between the terminal iteration number and the BS iteration number required to converge.
        \item
        On top of the aforementioned derived relationship, within RHFedMTL, we propose an resource-aware algorithm RHFedMTL, which dynamically
        adapts the terminal iteration number and corresponding BS iteration number to balance the trade-off between learning accuracy and resource cost, so as
        to reach superior performance with limited resource budget.
        \item
        We evaluate the performance of the RHFedMTL algorithm via extensive simulations, and validate its effectiveness, robustness and resource-awareness. 
    \end{enumerate}

    The remainder of the paper is organized as follows.
    In Section II, we introduce the details of system model, 
    present some preliminaries of federated multi-task learning, and introduce the resource-constrained problem.
    In Section III, we propose the vanilla HFedMTL algorithm and derive the dual problem.
    Section IV gives the RHFedMTL algorithm and discusses how to dynamically set the terminal iteration number based on the provided convergence analysis.  
    Section V gives the simulation scenarios and demonstrates the numerical results.
    We conclude the paper in Section V.

\section{System Model and Problem Formulation}

\subsection{System Model}
We primarily consider a resource-intense hierarchical mobile computing environment with multiple task models to be learned.
In particular, we assume there exist $N$ BSs (i.e., $B_1, \cdots, B_N$) connected to the cloud server, each corresponding to one of $N$ coupled tasks.
Meanwhile, each BS $B_b$, $b \in \{ 1, \cdots, N\}$ covers $N_b$ terminals $T_{b,t}$, $t \in \{ 1, \cdots, N_b\} $ (e.g., smart phones, IoT devices), each collecting $S_{b,t}$ samples of data.
Targeting to learn models $\boldsymbol{w}_b$, $b \in \{1, \cdots, N\}$ for all $N$ tasks, the MTL can be formulated as computing model parameters from local datasets

\begin{figure}
    \centering
    \includegraphics[scale=0.33]{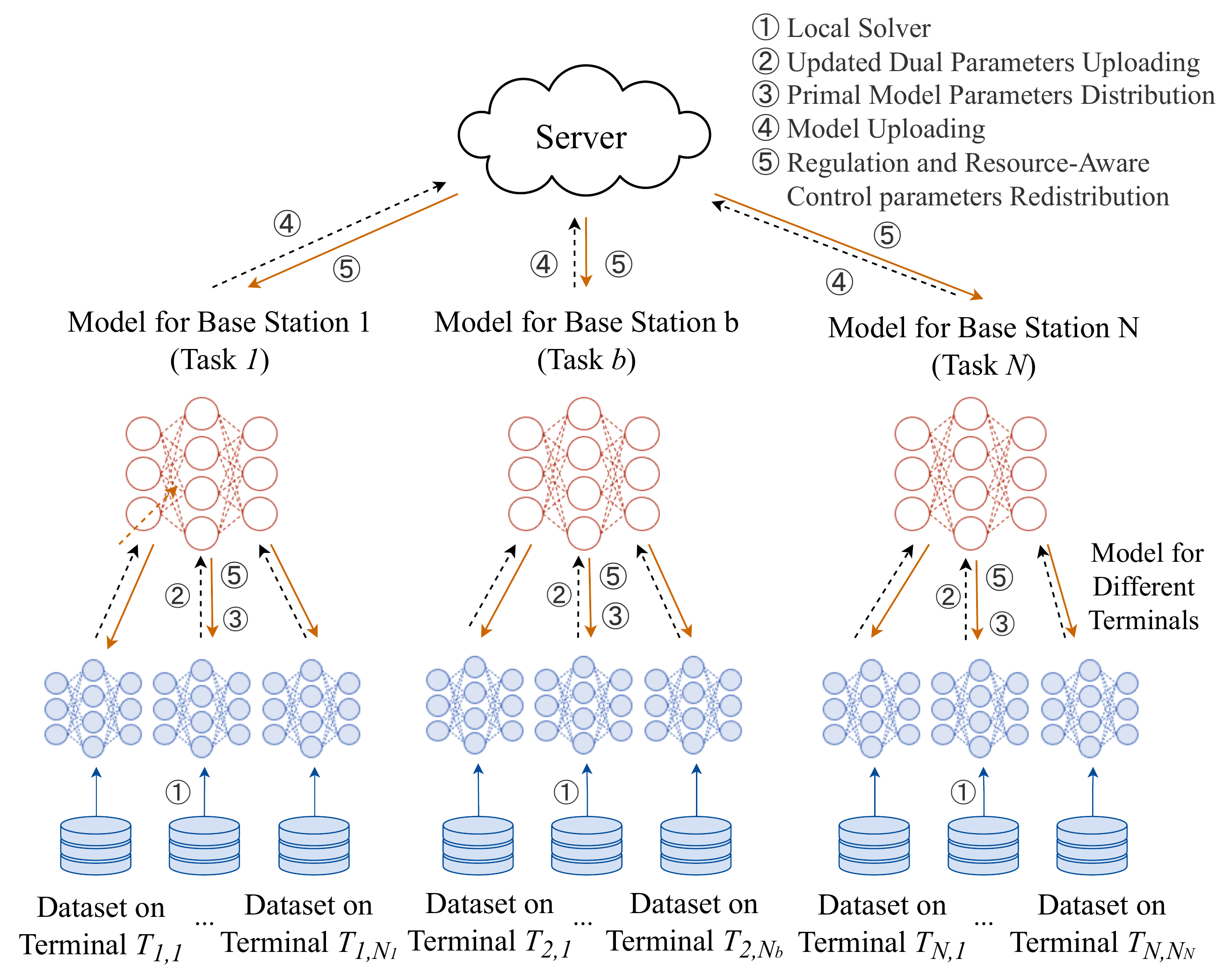}
    \caption{The system model of RHFedMTL.}
    \label{fg:system}
\end{figure}

\begin{table}
    \centering
    \caption{Notations.}
    \label{tb:notation}
    \begin{tabular}{lr}
        \toprule
        Notations                   & Description                                        \\
        \midrule
        $N$                         & Number of BSs                                      \\
        $N_b$                       & Number of Terminals under One BS                   \\
        $n_b$                       & Number of Data under One BS                        \\
        $n_{b,t}$                   & Dataset Size of Terminal $t$ under Task $b$        \\
        $\boldsymbol{w}_{b}$        & Model Parameters of Task $b$                       \\
        $\boldsymbol{\alpha}_{b}$   & Dual Parameters of All Data under Task $b$         \\
        $\boldsymbol{\alpha}_{[t]}$ & Dual Parameters of Local Data in Terminal $t$      \\
        $\mathcal{R}_b$             & MTL Regulation Function                            \\
        $\boldsymbol{r}_b$          & MTL Regulation Parameters                          \\
        $J$                         & Type of Considered Resources                       \\
        $C_{\text{bud}}$            & Resource Budget                                    \\
        $C_{\text{tol}}$            & Total Resource Consumption                         \\
        $C_{\text{dev}}$            & Standard Resource Cost of Single Device            \\
        $C_{\text{real}}$           & Resource Cost of Single Device                     \\
        $H$                         & No. of Terminal Iterations in a Specific Device \\
        \bottomrule
    \end{tabular}
\end{table}

\begin{align}
    \min_{\mathbf{W},\mathbf{R}}  \Bigg\{ & \frac{1}{N} \sum_{b=1}^N \bigg( \frac{1}{n_b} \sum_{i=1}^{n_b} \mathcal{L}( \boldsymbol{w}_{b}^\intercal \boldsymbol{x}_{b}^i)  + \frac{\lambda_1}{2}  \| \boldsymbol{w}_{b}\Vert^2 + \mathcal{R}_b(\boldsymbol{w}_{b}) \bigg)  \Bigg\}  \nonumber \\
    s.t. \quad                            & y_{b}^i = \boldsymbol{w}_{b}^\intercal \boldsymbol{x}_{b}^i \quad \forall i \in \{1,\ldots,n_b\}, b \in \{1,\ldots,N\}, \label{eq:mtl0}
\end{align}
where $n_b \triangleq \sum_{t=1}^{N_b} S_{b,t}$ represents the total number of data distributed under BS $b$ (i.e., task $b$) and the superscript $\intercal$ indicates the transpose operator. $y_{b}^i \triangleq \boldsymbol{w}_{b}^\intercal \boldsymbol{x}_{b}^i$ corresponds to the output for input of $\boldsymbol{x}_{b}^i$ in local dataset under BS $B_b$.
In summary, $\mathbf{W} \triangleq  [\boldsymbol{w}_{1} \cdots \boldsymbol{w}_{N_b}]$ and $\mathbf{R} \triangleq [\mathcal{R}_1, \ldots, \mathcal{R}_b, \ldots, \mathcal{R}_N]$ represents the weights of all BSs and the MTL regulation function for all tasks, respectively.
Consistent with the single-task learning, the MTL in \eqref{eq:mtl0} imposes an $\mathcal{L}_2$-regulation with constant $\lambda_1 > 0$. Nevertheless, MTL adds regulation functions $\mathcal{R}_b(\boldsymbol{w}_{b})$ to reflect the relationship among tasks.
Typically, assumptions on MTL regulation functions can be categorized into two groups, dependent on the a-priori existence of the relationships amongst tasks \cite{mocha}.
Specifically, \cite{multitask, multitask-feature} assume that the coupling structure between different tasks is known a priori
while \cite{multitask2} assumes unknown yet learnable relationships between different tasks.
Taking the example of the former case, consistent with \cite{zhou2010exclusive}, we could directly adopt an $\mathcal{L}_2$ regulation on $\boldsymbol{W}$ to constrain the difference of models amongst different tasks while reflecting the probably resemblance between the learning tasks (to a reference task).
Mathematically, such a regulation could be formulated as
\begin{equation}\label{eq:r}
    \mathcal{R}_b(\boldsymbol{w}_{b}) = \frac{\lambda_2}{2} \| \boldsymbol{w}_{b} - \boldsymbol{r}\Vert^2,
\end{equation}
where $\boldsymbol{r} \triangleq \frac{1}{N} \sum_{b=1}^N \boldsymbol{w_b}$ and $\lambda_2$ indicates the relative importance of the multi-task regulation to the overall loss in \eqref{eq:mtl0}.
Meanwhile, as for the second case, the multi-task regulation function $\mathcal{R}_b$ can be set as
\begin{equation}\label{eq:r2}
    \mathcal{R}_b(\boldsymbol{w}_{b}) = \frac{1}{N} \lambda_2 \text{tr} (\boldsymbol{W} \boldsymbol{\Omega}^{-1} \boldsymbol{W}^\intercal),
\end{equation}
where $\text{tr} (\cdot )$ denotes the trace of a matrix. Furthermore, the multi-task relationship matrix $\boldsymbol{\Omega}$ in \eqref{eq:r2} can be learned from data.
In this work, we primarily focus on the former case in \eqref{eq:r} while our work can be extended to the latter one with an additional learning loop.

\subsection{Problem Formulation}

In resource-intense mobile computing environment, it is of vital importance to limit the amount of resources to achieve a target error of the loss function, so as to keep the operational cost low on the basis of no system backlog. Hence, the formulation in \eqref{eq:mtl0} shall be complemented with the resource constraints.
Consistent with \cite{adaptive}, the terminology of ``resources" here is generic and can include time, energy \& economic cost related to both computation and communication.
Without loss of generality, assume that there exists $J$ types of resources and the resource cost for the same type of devices is equal.
$C_{j,\text{tol}}$ and $C_{j,\text{bud}}$, for $j \in \{1,\ldots, J\}$, represents the total resource consumption and the budget of type-$j$ resource respectively. We mainly focus on the resource consumption of terminals and BSs due to its natural importance in the mobile network.
Besides, the standard resource consumption of  type-$j$ resource  for a terminal and a BS to perform an iteration step is defined as $C_{j,\text{dev}}$ and $C_{j, \text{BS}}$ respectively. Furthermore, assume that an hierarchical iteration methodology is adopted here. In other words, one server iteration includes multiple BS iterations, while one BS iteration encompasses several terminal iteration. Consequently, on the basis of $M$ server iterations, for $K$ BS iterations and $H_b$ terminal iterations  under certain BS $B_b$ ($ b \in \{ 1, \ldots, N\} $), the type-$j \in \{1, \ldots, J \}$ resource consumption  shall be bounded as
\begin{align}
    K M \sum_{b=1}^N \big( C_{j,\text{BS}} + {N_b} H_b C_{j,\text{dev}} \big) \leq C_{j,\text{bud}}.
\end{align}
In other words, the MTL problem in \eqref{eq:mtl0} can be re-formulated as \eqref{eq:problem}. In the paper, for the coupled MTL, we resort to developing a resource-aware federated solution to yield a learning methodology with appropriate $K$ and $H_b$, on the basis of fundamental federated assumptions (i.e., no-violation of the privacy concerns).

\begin{figure*}[ht]
    \begin{align}
        \min_{\mathbf{W},\mathbf{R}} & \Bigg\{ \frac{1}{N} \sum_{b=1}^N \bigg( \frac{1}{n_b} \sum_{i=1}^{n_b} \mathcal{L}( \boldsymbol{w}_{b}^\intercal \boldsymbol{x}_{b}^i)  + \frac{\lambda_1}{2}  \| \boldsymbol{w}_{b}\Vert^2 + \mathcal{R}_b(\boldsymbol{w}_{b}) \bigg)  \Bigg\} \quad \forall K,H_b \in \{ 1, \ldots, \infty \} \notag \\
        s.t.                         & \quad  K M \sum_{b=1}^N \big( C_{j,\text{BS}} + {N_b} H_b C_{j,\text{dev}} \big) \leq C_{j,\text{bud}} \quad \forall j \in \{1, \ldots, J \}, \forall b \in \{ 1, \ldots, N\}  \notag                                                                                                          \\
                                     & \quad y_{b}^i = \boldsymbol{w}_{b}^\intercal \boldsymbol{x}_{b}^i \quad \forall i \in \{1,\ldots,n_b\}, b \in \{1,\ldots,N\} \label{eq:problem}
    \end{align}
    \hrulefill
\end{figure*}

\section{Hierarchical Federated Multi-Task Learning}
The coupled MTL models make it challenging to directly compute the parameters without knowing all the dataset distributed among terminals. In this part, inspired by the primal-dual methodology, we talk about how to formulate the dual formulation of \eqref{eq:problem} and decompose the global problem into distributed sub-problems without sacrificing the privacy of local terminals.

We define $\boldsymbol{\alpha}$ to be the concatenation of all the dual variables ${\alpha}_{b}^i$.
Mathematically, $\boldsymbol{\alpha} \triangleq [ \boldsymbol{\alpha}_1, \ldots, \boldsymbol{\alpha}_b, \ldots, \boldsymbol{\alpha}_N ]$,
where $\boldsymbol{\alpha}_b$ is the concatenation of all the dual variables under the same BS and can be represented in two inter-changeable forms:
(i) $\boldsymbol{\alpha}_b \triangleq [{\alpha}_b^1, \ldots, {\alpha}_b^{n_b}]$ is a simple combination of all the dual variables distributed among various terminals under one BS, where ${\alpha}_{b}^i$ is the dual variable for the data point $(\boldsymbol{x}_{b}^i,{y}_{b}^i)$.
(ii) $\boldsymbol{\alpha}_b \triangleq [\boldsymbol{\alpha}_{[1]}, \ldots, \boldsymbol{\alpha}_{[t]}, \ldots, \boldsymbol{\alpha}_{[N_b]} ]$ where $\boldsymbol{\alpha}_{[t]} \in \mathbb{R}^{S_{b,t}}$ is the concatenation of dual variables corresponding to the local dataset of terminal $T_{b,t}$.

We leverage the following lemma to demonstrate the equivalence between the primal problem and the aggregation of decomposed sub-problems at each BS, thus facilitating the subsequent analysis of an hierarchical federated MTL solution.

\begin{lemma}\label{lemmaDt}
    For any $\boldsymbol{\alpha}$, we have
    \begin{align}
        D(\boldsymbol{\alpha})
        = & \frac{1}{N} \sum_{b=1}^N  \bigg(  \frac{1}{n_b} \sum_{i=1}^{n_b} -\mathcal{L}^*(-{\alpha}_{b}^i) + \mathcal{R}_b^*(\boldsymbol{w}_{b})  \bigg) \label{eq:dual0} \\
        = & \frac{1}{N} \sum_{b=1}^N {D}_{b}(\boldsymbol{\alpha}_{b}),  \nonumber
    \end{align}
    where
    \begin{align}
           & {D}_{b}(\boldsymbol{\alpha}_{b})  \label{eq:localdual}                                                                                                                                                                                                                                 \\
        \triangleq & \frac{1}{n_b} \sum_{i=1}^{n_b} -\mathcal{L}^*( - {\alpha}_{b}^i )  + \frac{\lambda_2^2 \left\lVert \boldsymbol{r}_b \right\rVert^2 - \left\lVert \boldsymbol{A}_b \boldsymbol{\alpha}_b \right\rVert^2 -  2 \lambda_2 \boldsymbol{A}_b \boldsymbol{\alpha}_b \boldsymbol{r}_b^\intercal    }{2(\lambda_1 + \lambda_2)} \notag
    \end{align}
    and $ \boldsymbol{A}_b \in \mathbb{R}^{d \times n_b} $ collects data examples $ A_i = \frac{1}{n_b} \boldsymbol{x}_b^i$ in its columns.
    Given dual variables $\boldsymbol{\alpha}_b$, corresponding primal variables can be found via $\boldsymbol{w}_{b}  = \frac{1}{\lambda_1 + \lambda_2} \big( \lambda_2 \boldsymbol{r}_b + \boldsymbol{A}_b \boldsymbol{\alpha}_b \big)$. Moreover, $\mathcal{L}^*$ and $\mathcal{R}_b^*$ are the conjugate dual functions of $\mathcal{L}$ and $\mathcal{R}_b$ respectively.
\end{lemma}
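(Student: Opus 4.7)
The plan is to derive the dual by standard Fenchel--Rockafellar conjugation applied to each BS-local primal subproblem, and then to sum across BSs. The first observation is that, once the reference $\boldsymbol{r}_b$ appearing in $\mathcal{R}_b(\boldsymbol{w}_b)=\tfrac{\lambda_2}{2}\|\boldsymbol{w}_b-\boldsymbol{r}_b\|^2$ is regarded as fixed for BS $b$ (namely, the value broadcast by the cloud at the current outer iteration), the primal \eqref{eq:problem} decouples into $N$ independent subproblems $P_b(\boldsymbol{w}_b)=\tfrac{1}{n_b}\sum_{i=1}^{n_b}\mathcal{L}(\boldsymbol{w}_b^\intercal\boldsymbol{x}_b^i)+\tfrac{\lambda_1}{2}\|\boldsymbol{w}_b\|^2+\tfrac{\lambda_2}{2}\|\boldsymbol{w}_b-\boldsymbol{r}_b\|^2$. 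It therefore suffices to derive \eqref{eq:localdual} as the dual of $P_b$; \eqref{eq:dual0} then follows by averaging over $b$.

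For each $b$, I would introduce dual variables $\alpha_b^i$ through the Fenchel identity $\mathcal{L}(z)=\sup_{\alpha}\bigl[-\alpha z-\mathcal{L}^*(-\alpha)\bigr]$, which rewrites $P_b(\boldsymbol{w}_b)$ as a supremum over $\boldsymbol{\alpha}_b$ of an expression that is affine in $\boldsymbol{\alpha}_b$ and strongly convex-quadratic in $\boldsymbol{w}_b$ (coefficient $\lambda_1+\lambda_2>0$). Because $\mathcal{L}$ is proper, convex, and lower-semicontinuous by the standard SDCA assumptions, strong duality holds, permitting a $\min_{\boldsymbol{w}_b}\!$--$\sup_{\boldsymbol{\alpha}_b}$ swap. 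The inner minimization is a pure quadratic whose stationary point is $\boldsymbol{w}_b=\tfrac{1}{\lambda_1+\lambda_2}(\lambda_2\boldsymbol{r}_b+\boldsymbol{A}_b\boldsymbol{\alpha}_b)$, reproducing the primal--dual map stated in the lemma.

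Substituting this minimizer back yields the closed-form inner value $-\tfrac{1}{2(\lambda_1+\lambda_2)}\|\lambda_2\boldsymbol{r}_b+\boldsymbol{A}_b\boldsymbol{\alpha}_b\|^2$, which, after expanding the square and combining with the constant $\tfrac{\lambda_2}{2}\|\boldsymbol{r}_b\|^2$ produced by the $\ell_2$ MTL regulator, gives the compact quadratic fraction on the right-hand side of \eqref{eq:localdual}. The remaining $-\tfrac{1}{n_b}\sum_i\mathcal{L}^*(-\alpha_b^i)$ term falls out directly from the Fenchel substitution. Averaging the resulting $D_b(\boldsymbol{\alpha}_b)$ across $b\in\{1,\ldots,N\}$ then yields \eqref{eq:dual0}.

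The main obstacle is not conceptual but algebraic bookkeeping: one must be careful with the sign convention in the Fenchel identity (to produce $-\mathcal{L}^*(-\alpha_b^i)$ rather than $\mathcal{L}^*(\alpha_b^i)$), and must correctly combine the quadratic contribution from the inner minimization with the $\tfrac{\lambda_2}{2}\|\boldsymbol{r}_b\|^2$ constant so that the coefficients of $\|\boldsymbol{r}_b\|^2$, $\|\boldsymbol{A}_b\boldsymbol{\alpha}_b\|^2$, and $\boldsymbol{r}_b^\intercal\boldsymbol{A}_b\boldsymbol{\alpha}_b$ match \eqref{eq:localdual}. A secondary point worth double-checking is the legitimacy of the $\min$--$\sup$ swap, but this is guaranteed by the strong convexity of the quadratic in $\boldsymbol{w}_b$ together with convexity of $\mathcal{L}$, both implicit in the paper's primal--dual framework.
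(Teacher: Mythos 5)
Your proposal is correct and follows essentially the same route as the paper's appendix proof: the paper introduces the dual variables as Lagrange multipliers for the constraints $y_b^i=\boldsymbol{w}_b^\intercal\boldsymbol{x}_b^i$ and then infimizes separately over $y_b^i$ (yielding $-\mathcal{L}^*(-\alpha_b^i)$, i.e.\ your Fenchel-conjugation step) and over $\mathbf{W}$ (yielding the same closed-form minimizer $\boldsymbol{w}_b=\frac{1}{\lambda_1+\lambda_2}(\lambda_2\boldsymbol{r}_b+\boldsymbol{A}_b\boldsymbol{\alpha}_b)$ that you obtain). One minor observation: carrying out the final back-substitution carefully gives a coefficient $\lambda_1\lambda_2$ rather than $\lambda_2^2$ on $\lVert\boldsymbol{r}_b\rVert^2$, a discrepancy with the stated form of \eqref{eq:localdual} that is constant in $\boldsymbol{\alpha}_b$ and therefore immaterial to either argument.
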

\begin{proof}
    The proof is a direct application of the definition of the conjugate function and we leave it in Appendix.
\end{proof}

Lemma \ref{lemmaDt} implies the MTL problem can be solved according to several sub-problems by individual BSs. Next, we resort to the following methods to tackle each sub-problem.
For each sub-problem of \eqref{eq:localdual}, \textit{coordinate ascent} methods have proven to be very efficient since it requires no step size, and has a well-defined stopping criterion given by the duality gap \cite{cocoa}.
We use \textit{stochastic dual coordinate descent} (SDCA) method which has proven very suitable for use in large-scale problems, while giving stronger convergence results than the primal-only methods (e.g., SGD) at the same iteration cost \cite{sdca}.
Given that the formulation of
\eqref{eq:localdual} allows terminals to find updates $\Delta \alpha_{b}^i$ to the dual variables in $\boldsymbol{\alpha}_b$ by only accessing the data $(\boldsymbol{x}_{b}^i,{y}_{b}^i)$ stored locally.
In other words, the terminals under the same BS can solve the related part of the subproblem locally and independently, and the algorithm could converge at a higher speed compared to gradient decent methods.
Therefore, with selected data $(\boldsymbol{x}_{b,t}^i,y_{b,t}^i)$ in the training process, by Lemma \ref{lemmaDt}, terminal-oriented sub-problem can be formulated as
\begin{equation}\label{eq:local}
    \begin{split}
        \max_{\Delta \alpha_{b}^i} \bigg\{&  - \mathcal{L}^*( - ( {\alpha}_{b}^i + \Delta \alpha_{b}^i) )  + \frac{n_b}{{2(\lambda_1 + \lambda_2)}} \bigg[ \lambda_2^2 \left\lVert \boldsymbol{r}_b \right\rVert^2  \\
            & - \left\lVert \boldsymbol{w}_b + \frac{1}{(\lambda_1 + \lambda_2) n_b} \Delta \alpha_b^i \boldsymbol{x}_{b,t}^i \right\rVert^2  \\
            & - 2 \lambda_2 \Big( \boldsymbol{w}_b + \frac{1}{(\lambda_1 + \lambda_2) n_b} \Delta \alpha_b^i \boldsymbol{x}_{b,t}^i \Big) \boldsymbol{r}_b^\intercal  \bigg]  \bigg\}.
    \end{split}
\end{equation}

Finally, the MTL problem in \eqref{eq:dual0} has been split into several sub-problems across distributed BSs, each of which can be tackled by terminals locally. Correspondingly, a hierarchical federated MTL (HFedMTL) solution, which involves a hierarchical iteration mechanism including the server iteration, the BS iteration and the terminal iteration, can be attained.
\begin{itemize}
    \item As for each terminal iteration of the training process, terminals conduct their local training process to maximize the local optimizing function \eqref{eq:local} and communicate with their BS periodically.
    \item During each BS iteration, the BSs update and distribute the model parameters after aggregating the uploaded intermediate variables of all terminals. Subsequently, the BSs upload their model to the server at a lower frequency. 
    \item Based on the received models, the server updates the parameters of regulation function $\mathcal{R}_b$ and sends the updated parameters back to terminals through their connected BSs. During the server iteration, the server generates and sends the regulation parameters to all terminals (via BSs) using the uploaded model parameters of all tasks.
\end{itemize}
In a summary, for each server iteration, the BSs will perform $K$ BS iterations; while for each iteration of BS $B_b$, $ b \in \{ 1, \ldots, N\} $, the terminals will perform $H_b$ terminal iterations.

\section{Resource-Aware HFedMTL Method}\label{RHFedMTL}
\subsection{Convergence Analysis of HFedMTL}
Before delving into the details of resource-aware HFedMTL, we first present the convergence analysis of HFedMTL, which will guide the designs of a resource aware solution.

Beforehand, we give a definition of the duality gap after one terminal iteration.
\begin{definition}
    Within each BS iteration, we define the duality gap of sub-problems specifying how far we are from the optimum on terminal $t$ with all other terminals fixed.
    Mathematically, given $\boldsymbol{\alpha}_{[1]} \ldots \boldsymbol{\alpha}_{[N_b]}$ from $N_b$ terminals,
    \begin{equation} \label{eq:eps}
        \begin{split}
            \mathcal{E}_{{D}_{b,t}}{ {(\boldsymbol{\alpha}_b)}} \triangleq \max_{\hat{\boldsymbol{\alpha}}_{[t]}} {D}_b( \boldsymbol{\alpha}_{[1]}, \ldots, \hat{\boldsymbol{\alpha}}_{[t]}, \ldots, \boldsymbol{\alpha}_{[N_b]}) - &\\
            {D}_b( \boldsymbol{\alpha}_{[1]}, \ldots, {\boldsymbol{\alpha}}_{[t]}, \ldots, \boldsymbol{\alpha}_{[N_b]})&
        \end{split}
    \end{equation}
\end{definition}
We also have the following assumption of the updated duality gap.
\begin{assumption}\label{assum1}
    We assume that there exists $\Theta_{b} \in (0,1]$ such that for any given $\boldsymbol{\alpha}_b$, the subproblem running on terminal $t$ alone returns a (possibly random) update $\Delta \boldsymbol{\alpha}_{[t]}$ after $H_b$ terminal iterations such that the expectation of updated duality gap is bounded as:
    \begin{equation}
        \mathbf{E} \Big[ \mathcal{E}_{{D}_{b,t}}( \boldsymbol{\alpha}_{[1]}, \ldots, {\boldsymbol{\alpha}}_{[t]} + \Delta \boldsymbol{\alpha}_{[t]}, \ldots, \boldsymbol{\alpha}_{[N_b]}) \Big] \leq \Theta_{b} \cdot \mathcal{E}_{{D}_{b,t}} (\boldsymbol{\alpha}_b)
    \end{equation}
    Assume the loss functions $\mathcal{L}$ is ($1/\gamma$)-smooth.
    Then, consistent with  \cite[Proposition 1]{cocoa},
    \begin{equation} \label{eq:Theta}
        \Theta_{b} = \left(1 - \frac{(\lambda_1 + \lambda_2) n_b \gamma}{1 + (\lambda_1 + \lambda_2) n_b \gamma}  \frac{1}{\tilde{n}_b}  \right)^{H_b}
    \end{equation}
    where $\tilde{n}_b \triangleq \max_t{n_{b,t}}$ is the size of the largest local dataset among terminals.
    Specifically, $\Theta_{b} \triangleq 1$ means that the terminals under task $b$ made no updates and therefore no resource consumption,
    while $\Theta_{b} \rightarrow 0$ implies that the duality gap comes to zero, which is unrealistic as the number of terminal iterations (and correspondingly the resource consumption) grows to infinity in this case.
    Therefore, $\Theta_{b}$ indicates the resource required by the system.
    We also define $\Theta \triangleq \max_b{\Theta_b}$ to facilitate subsequent operations.
\end{assumption}

The following theorem gives the convergence analysis of HFedMTL with respect to the number of BS iteration $K$.
\begin{theorem}\label{the}
    Assume the loss functions $\mathcal{L}$ is $(1/\gamma)$-smooth, choosing BS iteration number $K$ such that
    \begin{equation}\label{eq:k}
        \begin{split}
            K > \Big(1 - (1 - \Theta) \frac{\eta^*}{T^*} \Big) \log{\frac{\sum_{b=1}^N n_b}{N \epsilon_D}},
        \end{split}
    \end{equation}
    we have
    \begin{equation}\label{eq:duality_gap}
        \mathbf{E} \big[ D({\boldsymbol{\alpha}}^{*}) - D({\boldsymbol{\alpha}}^{(K)}) \big] \leq \epsilon_D,
    \end{equation}
    where $T^* \triangleq \max_{b} N_b$, $\eta^* \triangleq \min_{b} \eta_b$ and $\eta_b = \frac{(\lambda_1 + \lambda_2)  \gamma}{n_b \sigma  + (\lambda_1 + \lambda_2 ) \gamma}$ with $\sigma \geq \max_b{{\sigma_b}_{\min}} $ and
    \begin{equation}\label{eq:sigma}
        \begin{split}
            {\sigma_b}_{\min} \triangleq \max_{\alpha}  n_b^2
            \frac{ \sum_{t=1}^{N_b} \left\lVert \boldsymbol{A}_{[t]}  \boldsymbol{\alpha}_{[t]} \right\rVert ^2 - \left\lVert \boldsymbol{A}_b \boldsymbol{\alpha}_b \right\rVert ^2 }{ \left\lVert \boldsymbol{\alpha}_b \right\rVert ^2 }
        \end{split}
    \end{equation}

\end{theorem}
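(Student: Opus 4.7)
The plan is to adapt the standard primal--dual convergence argument of CoCoA (applied in MOCHA for multi-task settings) to the hierarchical structure here, using Lemma \ref{lemmaDt} to reduce the global convergence question to per-BS bounds and then assembling them through the cloud aggregation step.

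First, I would fix a BS iteration index $k$ and look at the expected change $\mathbf{E}[D(\boldsymbol{\alpha}^{(k+1)}) - D(\boldsymbol{\alpha}^{(k)})]$. Because the server step just refreshes $\boldsymbol{r}_b$ from the current primal iterates and the BS aggregation is the sum of terminal updates, this change decomposes, via Lemma \ref{lemmaDt}, into the sum over BSs of the expected gain of each $D_b$. For a single BS $B_b$, the definition of $\mathcal{E}_{D_{b,t}}$ in \eqref{eq:eps} together with the $\sigma_b$-type bound in \eqref{eq:sigma} provides a lower bound on the per-terminal ascent in $D_b$ in terms of the current BS-level duality gap $G_b(\boldsymbol{\alpha}_b) := D_b^\star - D_b(\boldsymbol{\alpha}_b)$; this is the standard ``safe separable approximation'' argument from the CoCoA analysis, where $\sigma_b$ controls how much we lose by replacing the coupled quadratic $\|\boldsymbol{A}_b \boldsymbol{\alpha}_b\|^2$ by the separable surrogate $\sum_t \|\boldsymbol{A}_{[t]}\boldsymbol{\alpha}_{[t]}\|^2$.

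Next, I would combine Assumption \ref{assum1} with this separable lower bound: after $H_b$ terminal iterations, each terminal closes a $(1-\Theta_b)$ fraction of its local gap, so averaging over the $N_b$ terminals (weighted by the $1/N_b$ that comes out of the separable bound) gives
\begin{equation*}
\mathbf{E}\bigl[D_b^\star - D_b(\boldsymbol{\alpha}_b^{(k+1)})\bigr] \le \Bigl(1 - (1-\Theta_b)\,\frac{\eta_b}{N_b}\Bigr)\bigl(D_b^\star - D_b(\boldsymbol{\alpha}_b^{(k)})\bigr),
\end{equation*}
with $\eta_b$ exactly the constant in the theorem. Averaging over BSs and upper-bounding $\Theta_b \le \Theta$, $N_b \le T^*$, $\eta_b \ge \eta^*$ yields the global geometric contraction
\begin{equation*}
\mathbf{E}\bigl[D(\boldsymbol{\alpha}^\star) - D(\boldsymbol{\alpha}^{(k+1)})\bigr] \le \Bigl(1 - (1-\Theta)\,\frac{\eta^*}{T^*}\Bigr)\mathbf{E}\bigl[D(\boldsymbol{\alpha}^\star) - D(\boldsymbol{\alpha}^{(k)})\bigr].
\end{equation*}

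Then I would iterate this contraction $K$ times starting from the initialization $\boldsymbol{\alpha}^{(0)}=\boldsymbol{0}$, which gives $D(\boldsymbol{\alpha}^\star)-D(\boldsymbol{\alpha}^{(0)}) \le \frac{1}{N}\sum_b n_b$ via the definition of $D_b$ and $\mathcal{L}^\star(0)=0$ (standard for hinge/logistic-type losses), and use $\log(1-x)\le -x$ on the contraction factor. Setting the resulting bound $\le \epsilon_D$ and solving for $K$ yields exactly \eqref{eq:k}, which establishes \eqref{eq:duality_gap}.

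The main obstacle I anticipate is the first technical step: proving the separable lower bound on per-BS ascent with constant $\eta_b$ defined through $\sigma_b$ in \eqref{eq:sigma}. This requires carefully expanding $D_b(\boldsymbol{\alpha}_b + \sum_t \Delta\boldsymbol{\alpha}_{[t]}) - D_b(\boldsymbol{\alpha}_b)$, using $(1/\gamma)$-smoothness of $\mathcal{L}$ to dominate the conjugate term by a quadratic, and then invoking \eqref{eq:sigma} to bound the cross terms between different terminals' updates; the rest of the argument is essentially bookkeeping once this contraction factor is in hand.
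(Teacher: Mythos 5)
Your proposal follows essentially the same route as the paper's proof: decompose the dual increase per BS via Lemma \ref{lemmaDt} and concavity, use Assumption \ref{assum1} to extract the $(1-\Theta)$ factor from the per-terminal duality gaps, lower-bound $\sum_{t}\mathcal{E}_{D_{b,t}}$ by $\eta_b\bigl(D_b(\boldsymbol{\alpha}_b^*)-D_b(\boldsymbol{\alpha}_b^{(k)})\bigr)$ through the $\sigma$-controlled separable surrogate and $(1/\gamma)$-smoothness, and iterate the resulting geometric contraction with the initial gap bound $D_b(\boldsymbol{\alpha}_b^*)-D_b(\boldsymbol{\alpha}_b^{(0)})\le n_b$. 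The only divergence is the last algebraic step: your inequality $\log(1-x)\le -x$ produces the standard threshold $K\ge \beta^{-1}\log\frac{\sum_b n_b}{N\epsilon_D}$ rather than the $(1-\beta)\log\frac{\sum_b n_b}{N\epsilon_D}$ form of \eqref{eq:k}, whereas the paper instead invokes $x>\log x$ applied to $x=1/(1-\beta)$ to reach that form, so your final step as written does not literally reproduce \eqref{eq:k} even though the preceding contraction argument is identical.
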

\begin{proof}

    According to Lemma \ref{lemmaDt}, the value of dual problem after $(k+1)$-th BS iteration can be expressed as

    \begin{equation}
        \begin{split}
            {D} \big(\boldsymbol{\alpha}^{(k+1)} \big)
            =& \frac{1}{N} \sum_{b=1}^N {D}_{b} \Big( \boldsymbol{\alpha}_{b}^{(k)} + \frac{1}{N_b} \sum_{t=1}^{N_b}  \Delta \boldsymbol{\alpha}_{[t]}  \Big)\\
        \end{split}
    \end{equation}
    Using the concavity of dual function $D$,
    \begin{equation}
        {D} \big(\boldsymbol{\alpha}^{(k+1)} \big) \geq  \frac{1}{N}  \sum_{b=1}^N \frac{1}{N_b} \sum_{t=1}^{N_b} {D}_{b} \Big( \boldsymbol{\alpha}_{b}^{(k)} +  \Delta \boldsymbol{\alpha}_{[t]}  \Big)
    \end{equation}

    Denoting $\hat{\alpha}_{[t]}^{*}$ to be the local maximizer as in \eqref{eq:eps}, we have
    \begin{align}
             & \mathbf{E} \Big[ {D} \big( \boldsymbol{\alpha}^{(k+1)} \big) - {D} \big(\boldsymbol{\alpha}^{(k)} \big)  \Big]\notag                                                                                                                             \\
        \geq & \frac{1}{N} \sum\nolimits_{b=1}^N \frac{1}{N_b} \sum\nolimits_{t=1}^{N_b} \Big\{ {D}_{b} \big( \boldsymbol{\alpha}_{b}^{(k)} + \Delta \boldsymbol{\alpha}_{[t]}  \big) - {D}_{b} \big(\boldsymbol{\alpha}_{b}^{(k)} \big)  \Big\} \notag         \\
        \geq & \frac{1}{NN_b} \sum\nolimits_{b=1}^N \sum\nolimits_{t=1}^{N_b} \Big\{   {D}_{b} \big( \boldsymbol{\alpha}_{b}^{(k)} + \Delta \boldsymbol{\alpha}_{[t]}  \big) \notag                                                                             \\
             & -  {D}_b( \boldsymbol{\alpha}_{[1]}^{(k)}, \ldots, \hat{\boldsymbol{\alpha}}^*_{[t]}, \ldots, \boldsymbol{\alpha}_{[N_b]}^{(k)})   \Big. \notag                                                                                                  \\
             & \Big.  + {D}_b( \boldsymbol{\alpha}_{[1]}^{(k)}, \ldots, \hat{\boldsymbol{\alpha}}^*_{[t]}, \ldots, \boldsymbol{\alpha}_{[N_b]}^{(k)})  -  {D}_{b} \big(\boldsymbol{\alpha}_{b}^{(k)} \big)  \Big\}                                              \\
        \geq & \frac{1}{NN_b} \sum\nolimits_{b=1}^N \sum\nolimits_{t=1}^{N_b} \Big\{  \mathcal{E}_{{D}_{b,t}}{ {(\boldsymbol{\alpha}_b^{(k)})}}  - \mathcal{E}_{{D}_{b,t}}{ {(\boldsymbol{\alpha}_b^{(k)} + \Delta \boldsymbol{\alpha}_{[t]}  )}} \Big\}. \notag
    \end{align}
    Meanwhile, under Assumption 1,
    \addtolength{\rightmargin}{0.05in}
    \begin{align}
             & \mathbf{E} \Big[ {D} \big( \boldsymbol{\alpha}^{(k+1)} \big) - {D} \big(\boldsymbol{\alpha}^{(k)} \big) \big| \boldsymbol{\alpha}^{(k)} \Big] \notag       \\
        \geq & \frac{(1 - \Theta)}{N} \sum\nolimits_{b=1}^N \frac{1}{N_b} \sum\nolimits_{t=1}^{N_b} \mathcal{E}_{{D}_{b,t}}{ {(\boldsymbol{\alpha}_b^{(k)})}}. \label{123}
    \end{align}

    Recalling the definition of $D_b(\boldsymbol{\alpha}_b)$ in \eqref{eq:localdual}, we can obtain \eqref{eq:e}, where the equalities $(a)$ to $(c)$ come from Lemma \ref{lemmaDt}. Besides, the inequality $(d)$ is due to \eqref{eq:theorem2_cocoa} in \cite[Theorem 2]{cocoa} with $\sigma$ in \eqref{eq:sigma}, and the inequality $(e)$ is given by introducing an extra $\eta_b \in [0,1]$ to link $\boldsymbol{\alpha}_b^{*}$ (i.e., the maximizer of \eqref{eq:e}). After applying the property of $(1/\gamma)$-smooth function and simple mathematical manipulations, we have the inequalities $(f)$ and $(g)$, respectively.

    \begin{figure*}
        \begin{align}
                                 & \sum_{t=1}^{N_b} \mathcal{E}_{{D}_{b,t}}{ {(\boldsymbol{\alpha}_b^{(k)})}} \notag                                                                                                                                                                                                                                                                                                                                                                           \\
            \stackrel{(a)}{=}    & \max_{\hat{\boldsymbol{\alpha}}_b}  \bigg\{ \sum\nolimits_{t=1}^{N_b} \Big[  {D}_b( \boldsymbol{\alpha}_{[1]}^{(k)} , \ldots, \hat{\boldsymbol{\alpha}}_{[t]}, \ldots, \boldsymbol{\alpha}_{[N_b]}^{(k)} ) -  {D}_b( \boldsymbol{\alpha}_{[1]}^{(k)}, \ldots, {\boldsymbol{\alpha}}_{[t]}^{(k)}, \ldots, \boldsymbol{\alpha}_{[N_b]}^{(k)}) \Big] \bigg\} \notag                                                                                            \\
            \stackrel{(b)}{=}    & \max_{\hat{\boldsymbol{\alpha}}_b} \Bigg\{  \frac{1}{n_b} \sum\nolimits_{i=1}^{n_b} \Big( -\mathcal{L}^*( - \hat{\alpha}_{b}^i )  + \mathcal{L}^*( - {\alpha}_{b}^{i^{(k)}} ) \Big) +  \frac{1}{2(\lambda_1 + \lambda_2)}  \sum\nolimits_{t=1}^{N_b}   \bigg( - 2 \lambda_2 \boldsymbol{A}_{[t]} \big( \hat{\boldsymbol{\alpha}}_{[t]} - {\boldsymbol{\alpha}}_{[t]}^{(k)}  \big) \boldsymbol{r}_b^\intercal \notag                                                 \\
                                 & \qquad- \left\lVert \boldsymbol{A}_b \boldsymbol{\alpha}_{b}^{(k)} +  \boldsymbol{A}_{[t]} (\hat{\boldsymbol{\alpha}}_{[t]} - \boldsymbol{\alpha}_{[t]}^{(k)} )  \right\rVert^2  + \left\lVert \boldsymbol{A}_b \boldsymbol{\alpha}_{b}^{(k)} \right\rVert^2  \bigg) \Bigg\}           \notag                                                                                                                                                               \\
            \stackrel{(c)}{=}    & \max_{\hat{\boldsymbol{\alpha}}_b} \Bigg\{ D_b(\hat{\boldsymbol{\alpha}_b}) - D_b(\boldsymbol{\alpha}^{(k)}_b)  + \frac{\left\lVert \boldsymbol{A}_b \hat{\boldsymbol{\alpha}}_{b}  \right\rVert^2 -  \lVert \boldsymbol{A}_b {\boldsymbol{\alpha}}_{b}^{(k)}  \rVert^2 }{2(\lambda_1 + \lambda_2)}  \notag                                                                                                                                                 \\
                                 & \qquad  + \frac{1}{2(\lambda_1 + \lambda_2)} \sum_{t=1}^{N_b} \bigg[  - \left\lVert \boldsymbol{A}_b \boldsymbol{\alpha}_{b}^{(k)} +  \boldsymbol{A}_{[t]} (\hat{\boldsymbol{\alpha}}_{[t]} - \boldsymbol{\alpha}_{[t]}^{(k)} )  \right\rVert^2 + \left\lVert \boldsymbol{A}_b \boldsymbol{\alpha}_{b}^{(k)} \right\rVert^2   \bigg]   \Bigg\} \label{eq:e}                                                                                                 \\
            \stackrel{(d)}{\geq} & \max_{\hat{\boldsymbol{\alpha}}_b} \bigg\{ D_b(\hat{\boldsymbol{\alpha}_b}) - D_b(\boldsymbol{\alpha}^{(k)}_b)  - \frac{\sigma }{2(\lambda_1 + \lambda_2)n_b^2} \left\lVert \hat{\boldsymbol{\alpha}}_{b} - \boldsymbol{\alpha}_{b}^{(k)} \right\rVert^2 \bigg\} \notag                                                                                                                                                                                     \\
            \stackrel{(e)}{\geq} & \max_{\eta_b \in [0,1]} \bigg\{ D_b( \eta_b \boldsymbol{\alpha}_b^{*} + (1 - \eta_b) \boldsymbol{\alpha}_b^{(k)} ) - D_b(\boldsymbol{\alpha}_b^{(k)})   - \frac{\sigma }{2(\lambda_1 + \lambda_2)n_b^2} \left\lVert \eta_b \boldsymbol{\alpha}_b^{*} + (1 - \eta_b) \boldsymbol{\alpha}_b^{(k)} - \boldsymbol{\alpha}_{b}^{(k)} \right\rVert^2  \bigg\} \notag                                                                                              \\
            \stackrel{(f)}{\geq} & \max_{\eta_b \in [0,1]}  \bigg\{ \eta_b D_b(\boldsymbol{\alpha}_b^*) + (1 - \eta_b) D_b(\boldsymbol{\alpha}_b^{(k)}) - D_b(\boldsymbol{\alpha}_b^{(k)}) +  \frac{\gamma \eta_b (1 - \eta_b)}{2 n_b} \left\lVert {\boldsymbol{\alpha}}_{b}^* - \boldsymbol{\alpha}_{b}^{(k)} \right\rVert^2   - \frac{\eta_b^2 \sigma }{2(\lambda_1 + \lambda_2)n_b^2} \left\lVert {\boldsymbol{\alpha}}_{b}^* - \boldsymbol{\alpha}_{b}^{(k)} \right\rVert^2 \bigg\} \notag \\
            \stackrel{(g)}{\geq} & \max_{\eta_b \in [0,1]} \bigg\{ \eta_b \big( D_b(  \boldsymbol{\alpha}_b^{*})  - D_b(\boldsymbol{\alpha}_b^{(k)}) \big)  + \frac{ \eta_b}{2 n_b}  \Big( \gamma (1 - \eta_b) - \frac{\eta_b \sigma }{ (\lambda_1 + \lambda_2) n_b } \Big)  \left\lVert {\boldsymbol{\alpha}}_{b}^* - \boldsymbol{\alpha}_{b}^{(k)} \right\rVert^2  \bigg\} \notag
        \end{align}
        \hrulefill
    \end{figure*}

    \begin{figure*}
        \begin{align}
            \max_{\hat{\boldsymbol{\alpha}}_b} \bigg\{ \left\lVert \boldsymbol{A}_b \hat{\boldsymbol{\alpha}}_{b}  \right\rVert^2 -  \lVert \boldsymbol{A}_b {\boldsymbol{\alpha}}_{b}^{(k)}  \rVert^2  + \sum_{t=1}^{N_b} \bigg[   - \left\lVert \boldsymbol{A}_b \boldsymbol{\alpha}_{b}^{(k)} +  \boldsymbol{A}_{[t]} (\hat{\boldsymbol{\alpha}}_{[t]} - \boldsymbol{\alpha}_{[t]}^{(k)} )  \right\rVert^2  + \left\lVert \boldsymbol{A}_b \boldsymbol{\alpha}_{b}^{(k)} \right\rVert^2  \bigg] \bigg\}  \geq \max_{\hat{\boldsymbol{\alpha}}_b} \Big\{  -   \frac{\sigma}{n_b^2} \left\lVert \hat{\boldsymbol{\alpha}}_{b} - \boldsymbol{\alpha}_{b}^{(k)} \right\rVert^2   \Big\} \label{eq:theorem2_cocoa}
        \end{align}
        \hrulefill
    \end{figure*}

    Letting $ (\gamma (1 - \eta_b) - \frac{\eta_b \sigma }{ (\lambda_1 + \lambda_2) n_b } )= 0 $, we have $\eta_b = \frac{(\lambda_1 + \lambda_2) n_b \gamma}{n_b \sigma  + (\lambda_1 + \lambda_2 ) n_b \gamma}$.
    Thus
    \begin{equation}\label{eq:efinal}
        \begin{split}
            \sum\nolimits_{t=1}^{N_b} \mathcal{E}_{{D}_{b,t}}{ {(\boldsymbol{\alpha}_b^{(k)})}}  \geq & \eta_b \Big( D_b(\boldsymbol{\alpha}_b^*) - D_b(\boldsymbol{\alpha}_b^{(k)}) \Big).\\
        \end{split}
    \end{equation}
    Substituting \eqref{eq:efinal} into \eqref{123}, we can derive
    \begin{equation}
        \begin{split}
            &\mathbf{E}  \Big[ D(\boldsymbol{\alpha}^{(k+1)}) - D(\boldsymbol{\alpha}^{(k)}) | \boldsymbol{\alpha}^{(k)} \Big] \\
            \geq &  \frac{(1 - \Theta)}{N} \sum\nolimits_{b=1}^N \frac{1}{N_b} \eta_b \Big( D_b(\boldsymbol{\alpha}^*) - D_b(\boldsymbol{\alpha}^{(k)}) \Big) \\
            \geq & \beta \Big(  D(\boldsymbol{\alpha}^*) - D(\boldsymbol{\alpha}^{(k)})  \Big),
        \end{split}
    \end{equation}
    where $\beta \triangleq (1 - \Theta) \frac{\eta^*}{T^*}$, $T^* \triangleq \max_b{N_b}  $ and $\eta^* \triangleq \min_b{\eta_b}$. Therefore,
    \begin{equation}
        \begin{split}
            &\mathbf{E} \Big[ D(\boldsymbol{\alpha}^*) - D(\boldsymbol{\alpha}^{(k+1)}) \Big] \\
            =& \mathbf{E} \bigg[ D(\boldsymbol{\alpha}^*) - D(\boldsymbol{\alpha}^{(k)}) - \Big( D(\boldsymbol{\alpha}^{(k+1)}) - D(\boldsymbol{\alpha}^{(k)}) \Big) \bigg]  \\
            \leq & (1 - \beta) \Big(D(\boldsymbol{\alpha}^*) - D(\boldsymbol{\alpha}^{(k)}) \Big).
        \end{split}
    \end{equation}
    Thus, we have
    \begin{equation}
        \begin{split}
            \mathbf{E} \Big[ D(\boldsymbol{\alpha}^*) - D(\boldsymbol{\alpha}^{(k)}) \Big] \leq  ( 1 - \beta )^k \Big(D(\boldsymbol{\alpha}^*) - D(\boldsymbol{\alpha}^{(0)}) \Big). \notag
        \end{split}
    \end{equation}
    Consistent with \cite{mocha}, we use the bound on the initial duality gap proved in \cite[Lemma 10]{cocoa2}, which states that $ {D}_{b}(\boldsymbol{\alpha}_{b}^{*}) - {D}_{b}(\boldsymbol{\alpha}_{b}^{(0)}) \leq n_b$.
    Given target convergence duality gap $\epsilon_D$, we can derive that
    \begin{equation}
        \begin{split}
            \epsilon_D \leq {(1 - \beta)}^K \frac{1}{N} \sum\nolimits_{b=1}^{N} n_b.
        \end{split}
    \end{equation}
    Finally, since when $x > 0$, $x > \log{x}$, we have
    \begin{equation}\label{eq:last}
        K \geq \frac{\log{\frac{N \epsilon_D}{\sum_{b=1}^N n_b}}}{\log{(1 - \beta)}}
        = \frac{\log{\frac{\sum_{b=1}^N n_b}{N \epsilon_D}}}{\log{\frac{1}{1 - \beta}}}
        > (1 - \beta) \log{\frac{\sum_{b=1}^N n_b}{N \epsilon_D}}. \notag
    \end{equation}
\end{proof}

\noindent\textit{Remark}: Theorem \ref{the} shows that the value of $\beta$ is mostly affected by the hardest task and the learning problem will always converge after an update of $\mathcal{R}_b$ if we set the number for BS iterations $K$ sufficiently large.
Hence, HFedMTL is robust to the change of multi-task learning parameters and guaranteed to converge.

Together with \eqref{eq:Theta}, Theorem \ref{the} implies the following corollary.
\begin{corollary}
    The minimal $K$ increases given a decrease of terminal iteration number $H_b$, so there exists a trade-off between the terminal iteration number $H_b$ and BS iteration number $K$.
\end{corollary}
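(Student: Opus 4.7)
\textit{Proof proposal.} The plan is to trace the monotone dependence of the lower bound on $K$ in \eqref{eq:k} through its dependence on $\Theta$, and then through the dependence of $\Theta$ on $H_b$ given by \eqref{eq:Theta}. Since Theorem \ref{the} already supplies the bound
\[
K > \Big(1 - (1 - \Theta)\frac{\eta^*}{T^*}\Big)\log\frac{\sum_{b=1}^N n_b}{N\epsilon_D},
\]
the whole argument reduces to showing that this right-hand side is a decreasing function of each $H_b$.

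First I would observe that the factor $\log\!\bigl(\sum_{b=1}^N n_b/(N\epsilon_D)\bigr)$ is a positive constant in $H_b$ (it depends only on the data sizes and the target duality gap), while $\eta^*/T^*$ is also independent of the terminal iteration counts because $\eta_b$ and $N_b$ depend only on $\lambda_1,\lambda_2,\gamma,\sigma,n_b$ and the network topology. Hence the only $H_b$-dependent quantity on the right-hand side of \eqref{eq:k} is $\Theta$, and the coefficient $1-(1-\Theta)\eta^*/T^*$ is strictly increasing in $\Theta$.

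Next I would examine \eqref{eq:Theta}. The base
\[
\rho_b \triangleq 1 - \frac{(\lambda_1+\lambda_2)n_b\gamma}{1+(\lambda_1+\lambda_2)n_b\gamma}\cdot\frac{1}{\tilde n_b}
\]
lies strictly in $(0,1)$, since the fraction subtracted from $1$ is positive and bounded above by $1/\tilde n_b\le 1$. Therefore $\Theta_b = \rho_b^{H_b}$ is strictly decreasing in $H_b$, so decreasing $H_b$ enlarges $\Theta_b$. Taking $\Theta = \max_b \Theta_b$, any decrease of some $H_b$ either strictly increases $\Theta_b$ (when $b$ realizes the maximum) or leaves $\Theta$ unchanged, and in every case $\Theta$ is nondecreasing; whenever $b$ is the binding task it strictly increases.

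Chaining these monotonicities, a decrease in $H_b$ drives $\Theta$ upward, which strictly enlarges $1-(1-\Theta)\eta^*/T^*$, which in turn strictly enlarges the lower bound on $K$ required to achieve the same target duality gap $\epsilon_D$. This establishes the claimed trade-off. The only mildly subtle step is the second one, distinguishing binding from non-binding tasks in the $\max$; but since an RHFedMTL designer interested in saving resources would shrink precisely the $H_b$ that governs $\Theta$, the qualitative trade-off stands regardless, and no further technical machinery is needed beyond Theorem \ref{the} and Assumption \ref{assum1}.
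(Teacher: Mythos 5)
Your proposal is correct and follows exactly the route the paper intends: the paper gives no explicit proof, merely noting that the corollary follows from Theorem \ref{the} ``together with \eqref{eq:Theta}'', and your argument is precisely the monotonicity chain being invoked ($\Theta_b=\rho_b^{H_b}$ with $\rho_b\in(0,1)$ decreasing in $H_b$, $\Theta=\max_b\Theta_b$ nondecreasing under a decrease of any $H_b$, and the lower bound in \eqref{eq:k} increasing in $\Theta$ since $\eta^*/T^*\in(0,1]$). Your care in distinguishing the binding task in the $\max$ is a small sharpening the paper glosses over, but it does not change the approach.
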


\begin{figure*}
    \begin{align} \label{eq:rH}
        M \sum_{b=1}^N \big( C_{j,\text{BS}} + {N_b} H_b C_{j,\text{dev}} \big) \bigg\{ 1 -  \frac{\eta^*}{T^*} \Big[ 1 - \big(1 - \frac{(\lambda_1 + \lambda_2) n_b \gamma}{1 + (\lambda_1 + \lambda_2) n_b \gamma}  \frac{1}{\tilde{n}_b}  \big)^{\min_b{H_b}} \Big]  \bigg\} \log{\frac{\sum_{b=1}^N n_b}{N \epsilon_D}} < C_{j,\text{bud}}
    \end{align}
    \hrulefill
\end{figure*}

\subsection{The Description of RHFedMTL}
To meet all the requirements of the resource-aware problem defined in \eqref{eq:problem},
we extend the vanilla HFedMTL method to a resource-aware one so as to optimize the system model at a minimum resource cost.

Different form HFedMTL pre-defining all parameters, the resource-aware HFedMTL method is capable to be aware of the costs and budget of the system resources,
so as to dynamically adjust the terminal iteration number $H_b$ under BS $B_b$, $ b \in \{1, \ldots, N\}$ and BS iteration number $K$ while meeting the convergence target of duality gap in \eqref{eq:duality_gap}.
Recalling Theorem \ref{the} that a larger terminal iteration number $H_b$ increases the resource consumption linearly, but decreases the incurred duality gap non-linearly,
given the resource budget, there must exists a range of feasible $H_b$, within which choosing a bigger $H_b$ will increase the terminal iteration cost, but the decreased BS iteration number $K$ required for convergence will decrease the overall resource consumption.
The following theorem verifies the aforementioned intuitions.

\begin{algorithm}[t] \SetKwInOut{Initialize}{Initialize} \SetKwData{Left}{left}\SetKwData{This}{this}\SetKwData{Up}{up} \SetKwFunction{Union}{Union}\SetKwFunction{FindCompress}{FindCompress} \SetKwInOut{Input}{input}\SetKwInOut{Output}{output}
    \caption{Resource-aware HFedMTL method}
    \label{al:rhfedmtl}
    \KwData{$(\boldsymbol{x}_{b,t}^i,y_{b,t}^i)$, $ t \in (1,2,\ldots,N_b)$, $b \in (1, \ldots, N)$. Each terminal $T_{b,t}$ contains a local dataset containing $S_{b,t}$ samples of data}
    \KwIn{$C_{j,\text{dev}}$, $C_{j,\text{BS}}$, $C_{j,\text{bud}}$}
    \Initialize{$\boldsymbol{\alpha}_{b}^{(0)} \triangleq 0 $, $\boldsymbol{w}_{b} \triangleq 0$, $\boldsymbol{r}_b \triangleq 0$, $\forall b \in (1, \ldots, N)$}
    \BlankLine
    \For(server iteration){$j = 0,1,\ldots$}{
    server send regulation parameters $\boldsymbol{r}_b$ to terminals through their connected BSs after gathering the uploaded information\\
    \For{BS (i.e., tasks) $B_b$, $b \in (0,1,\ldots,N)$ in parallel}{
    send $T^*$, $\eta^*$ to all terminals \\
    \For(BS iteration){$k = 0,1,\ldots, K$}{
    \For{all terminals $T_{b,t}$, $t \in \{ 1 \cdots N_b\} $ in parallel}{
    ($\Delta \boldsymbol{\alpha}_{[t]}$, $\Delta \boldsymbol{w}_{b,t}$) $\leftarrow$ \\
    \textsc{ResourceSavingMethod} ($\boldsymbol{\alpha}_{[t]}$, $\boldsymbol{w}_b$, $\boldsymbol{r}_b$, $C_{j,\text{dev}}$, $C_{j,\text{BS}}$, $C_{j,\text{bud}}$, $T^*$, $\eta^*$)

    }
    $\boldsymbol{w}_{b} \leftarrow  \frac{1}{\lambda_1 + \lambda_2} ( \lambda_2  \boldsymbol{r}_b + \boldsymbol{w}_{b} + \frac{1}{N_b}  \sum_{t=1}^{N_b}  \Delta \boldsymbol{w}_{b,t} )$\\
    }
    }
    update regulation parameters $\boldsymbol{r}_b \leftarrow \frac{1}{N} \sum_{b=1}^N \boldsymbol{w_b}$\\
    update $T^* \leftarrow \max_{b} N_b$, $\eta^* \leftarrow \min_{b} \frac{(\lambda_1 + \lambda_2)  \gamma}{n_b \sigma  + (\lambda_1 + \lambda_2 ) \gamma}$.
    }
    \KwOut{$\boldsymbol{w}_{b}$, $\boldsymbol{r}_b$, $\forall b \in (1, \ldots, N)$}
\end{algorithm}

\DecMargin{1em}

\begin{procedure}[!htp] \SetKwInOut{Initialize}{Initialize}
    \caption{LocalDualMethod ()}\label{localresource}
    \KwData{Local data $ \{ (\boldsymbol{x}_{b,t}^i,y_{b,t}^i) \}_{i=1}^{S_{b,t}} $}
    \KwIn{$\boldsymbol{\alpha}_{[t]}$, $\boldsymbol{w}_b$, $\boldsymbol{r}_b$, $C_{j,\text{dev}}$, $C_{j,\text{BS}}$, $C_{j,\text{bud}}$, $T^*$, $\eta^*$}
    \Initialize{
    $\Delta \boldsymbol{\alpha}_{[t]} \leftarrow 0$, $\Delta \boldsymbol{w}_{b,t} \leftarrow 0$, $\boldsymbol{w}_{b,t} \leftarrow \boldsymbol{w}_{b}$,
    $f(H) \triangleq (  {N_b} H C_{j,\text{dev}} + N C_{j,\text{BS}} ) ( 1 -  \frac{\eta^*}{T^*} \big[ 1 - (1 - \frac{(\lambda_1 + \lambda_2) n_b \gamma}{1 + (\lambda_1 + \lambda_2) n_b \gamma}  \frac{1}{\tilde{n}_b}  )^H \big] ) \log{\frac{\sum_{b=1}^N n_b}{N \epsilon_D}} $
    }
    $H \leftarrow 0$, $H^{\text{cad}} \leftarrow 0$, $c \leftarrow +\infty$\\
    \For(){$h =  S_{b,t}, \ldots, 1$}{

        \If(){$ f(h) < c $}{
            $c \leftarrow f(h)$\\
            $H^{\text{cad}} \leftarrow h$\\

        }
    
        \If(){$ f(h) \leq C_{j,\text{bud}}$ $\textbf{and}$ $f(h+1) > C_{j,\text{bud}}$ }{
        $H \leftarrow h$\\
        }

    }
    \If(){$H = 0$}{
        $H \leftarrow H^{\text{cad}}$\\
    }
    \For(terminal iteration){$h = 0,\ldots,H$}{
    choose $i$ uniformly at random in local dataset\\
    find $\Delta \alpha_{b}^i$ of $\boldsymbol{\alpha}_{b}$ to maximize the local optimizing function \eqref{eq:local}\\
    $\Delta \boldsymbol{\alpha}_{[t]} \leftarrow (\Delta \boldsymbol{\alpha}_{[t]})^{i} + \Delta \alpha_{b}^i$\\
    $\boldsymbol{w}_{b,t} \leftarrow \boldsymbol{w}_{b,t} + \frac{1}{\lambda_1 + \lambda_2} \Delta \alpha_{b}^i \boldsymbol{x}_{b,t}^i$\\
    }
    \KwOut{$\Delta \boldsymbol{\alpha}_{[t]}$, $\Delta \boldsymbol{w}_{b,t} \triangleq \frac{1}{\lambda_1 + \lambda_2} \big(\frac{1}{S_{b,t}} \sum_{i=1}^{S_{b,t}} x_{b,t}^i \alpha_b^i \big) $}
\end{procedure}

\begin{theorem}\label{thm:thetamin}

    Assume that the loss function $\mathcal{L}$ is ($1/\gamma$)-smooth,
    for any convergence target $\epsilon_D$, in order to solve the MTL problem in \eqref{eq:problem}, if there exists a terminal iteration number $H_b$, $b \in \{ 1, \cdots,  N\}$  satisfying \eqref{eq:rH}, the problem is feasible.
\end{theorem}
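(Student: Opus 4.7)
The claim is that \eqref{eq:rH} is \emph{sufficient} for feasibility of the resource-aware MTL in \eqref{eq:problem}. Feasibility here means that one can simultaneously (i) drive the expected dual suboptimality below $\epsilon_D$, and (ii) respect every per-type resource budget $C_{j,\text{bud}}$. Since Theorem \ref{the} already supplies a sufficient lower bound on the BS iteration count $K$ that guarantees (i) as a function of the chosen $\{H_b\}$, my plan is to take this bound as the effective $K$, insert it into the raw resource constraint, and show that the resulting inequality is exactly \eqref{eq:rH}. Hence the existence of $\{H_b\}$ satisfying \eqref{eq:rH} is precisely the statement that both constraints can be met at once.

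First, I would fix $\{H_b\}_{b=1}^N$ and invoke Assumption \ref{assum1} together with \eqref{eq:Theta} to obtain, for each BS, $\Theta_b = \bigl(1 - c_b\bigr)^{H_b}$ with $c_b \triangleq \tfrac{(\lambda_1+\lambda_2)n_b\gamma}{1 + (\lambda_1+\lambda_2)n_b\gamma}\cdot\tfrac{1}{\tilde n_b}$. Because $\Theta \triangleq \max_b\Theta_b$ and the map $H\mapsto(1-c_b)^H$ is strictly decreasing, the worst-case $\Theta$ is governed by the smallest terminal iteration count, which justifies the exponent $\min_b H_b$ appearing inside the braces of \eqref{eq:rH}. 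Theorem \ref{the} then tells us that taking $K$ equal to the right-hand side of \eqref{eq:k}, namely $K^\star(\{H_b\}) = \bigl(1 - (1-\Theta)\eta^\star/T^\star\bigr)\log\bigl(\sum_b n_b / (N\epsilon_D)\bigr)$, secures $\mathbf{E}[D(\boldsymbol{\alpha}^\star) - D(\boldsymbol{\alpha}^{(K)})] \leq \epsilon_D$.

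Next, I would plug this $K^\star$ into the raw resource inequality $KM\sum_{b=1}^N(C_{j,\text{BS}} + N_b H_b C_{j,\text{dev}}) \leq C_{j,\text{bud}}$ from \eqref{eq:problem}. Rewriting $1 - (1-\Theta)\eta^\star/T^\star = 1 - \tfrac{\eta^\star}{T^\star}\bigl[1 - \Theta\bigr]$ and substituting the explicit form of $\Theta$ from the previous paragraph produces exactly the left-hand side of \eqref{eq:rH}. Therefore, if \eqref{eq:rH} holds for some $\{H_b\}$, the pair $(K^\star(\{H_b\}), \{H_b\})$ is admissible: it simultaneously fulfils the convergence target through Theorem \ref{the} and the resource budget through direct substitution, which is the definition of feasibility for \eqref{eq:problem}.

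The main technical subtlety, which I would address carefully, is the scalarization of $\Theta = \max_b(1-c_b)^{H_b}$ into a single expression featuring $\min_b H_b$ under a per-$b$ coefficient $c_b$. When the $c_b$ are heterogeneous, the maximum over $b$ is not in general realised at $\arg\min_b H_b$, so the identity is only an upper bound. I would resolve this by taking $b^\star = \arg\max_b\Theta_b$, noting that the bound $\Theta \leq (1-c_{b^\star})^{\min_b H_b}$ only \emph{increases} the right-hand side of \eqref{eq:k}, hence tightens \eqref{eq:rH}; any $\{H_b\}$ verifying the tightened inequality still satisfies the original resource budget. This monotone-in-the-right-direction argument is the place where a sloppy step would break the equivalence, and it is the only non-routine part of the derivation—the remainder is algebraic rearrangement of quantities already established in Theorem \ref{the} and Assumption \ref{assum1}.
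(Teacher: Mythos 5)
Your proposal is correct and follows essentially the same route as the paper: the paper sandwiches $K$ between the resource-budget upper bound \eqref{eq:resourceKH} and the convergence lower bound \eqref{eq:resourceKL} from Theorem \ref{the}, then substitutes \eqref{eq:Theta}, which is logically equivalent to your step of inserting the minimal convergent $K$ into the budget constraint. Your explicit treatment of the $\Theta=\max_b\Theta_b$ versus $\min_b H_b$ scalarization is a point the paper passes over silently, and your monotonicity argument for why the resulting bound errs on the safe side is the right way to close that small gap.
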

\begin{proof}
    Considering the constraint in \eqref{eq:problem}, a direct analysis leads to that the iteration number $K$ under BS $B_b$ should satisfy:
    \begin{equation} \label{eq:resourceKH}
        K \leq \left\lfloor \min_{j\in \{1,\ldots,J\}} C_{j,\text{bud}} \bigg[ M \sum_{b=1}^N \big( C_{j,\text{BS}} + {N_b} H_b C_{j,\text{dev}} \big) \bigg]^{-1}  \right\rfloor.
    \end{equation}
    Taking account the convergence analysis in Theorem \ref{the},
    \begin{equation} \label{eq:resourceKL}
        K > \Big(1 - (1 - \Theta) \frac{\eta^*}{T^*} \Big) \log{\frac{\sum_{b=1}^N n_b}{N \epsilon_D}}.
    \end{equation}
    Rearranging \eqref{eq:resourceKH} and \eqref{eq:resourceKL}, we can derive that:
    \begin{equation}\label{eq:theta_constraint}
        \begin{split}
            &\Big(1 - (1 - \Theta) \frac{\eta^*}{T^*} \Big) \log{\frac{\sum_{b=1}^N n_b}{N \epsilon_D}} <  \\
            & \left\lfloor \min_{j\in \{1,\ldots,J\}} C_{j,\text{bud}} \bigg[ M \sum_{b=1}^N \big( C_{j,\text{BS}} + {N_b} H_b C_{j,\text{dev}} \big) \bigg]^{-1}   \right\rfloor.
        \end{split}
    \end{equation}
    Substituting \eqref{eq:Theta} into \eqref{eq:theta_constraint}, we have the theorem.
\end{proof}

\noindent\textit{Remark}: Theorem \ref{thm:thetamin} provides a feasible range of $H_b$, which will help determine an appropriate terminal iteration number. On the basis of that, the chosen BS iteration number will potentially reduce the overall resource consumption.

\begin{figure}[h]
    \centering
    \subfigure[Required $K$ for convergence]{
        \includegraphics[width=0.225\textwidth]{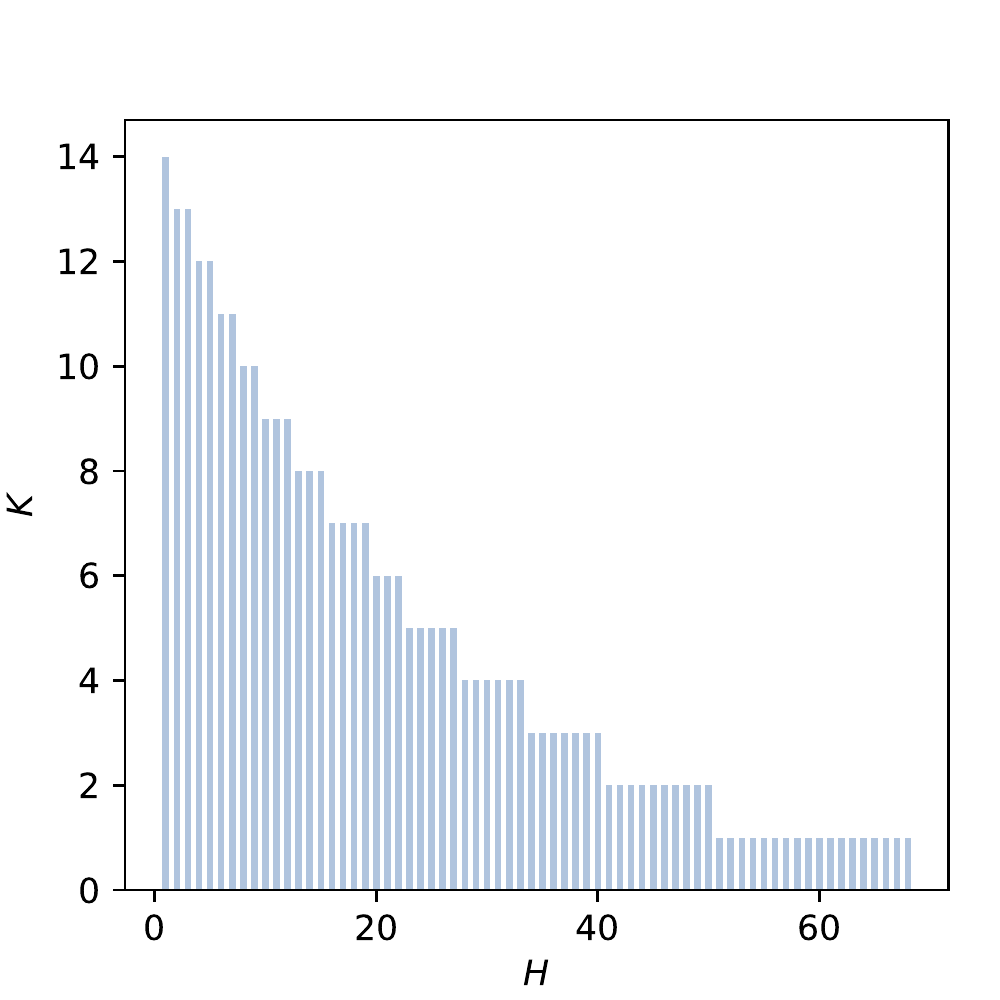}
    }
    \subfigure[$C_{j, \text{cost}}$ until the convergence target]{
        \includegraphics[width=0.225\textwidth]{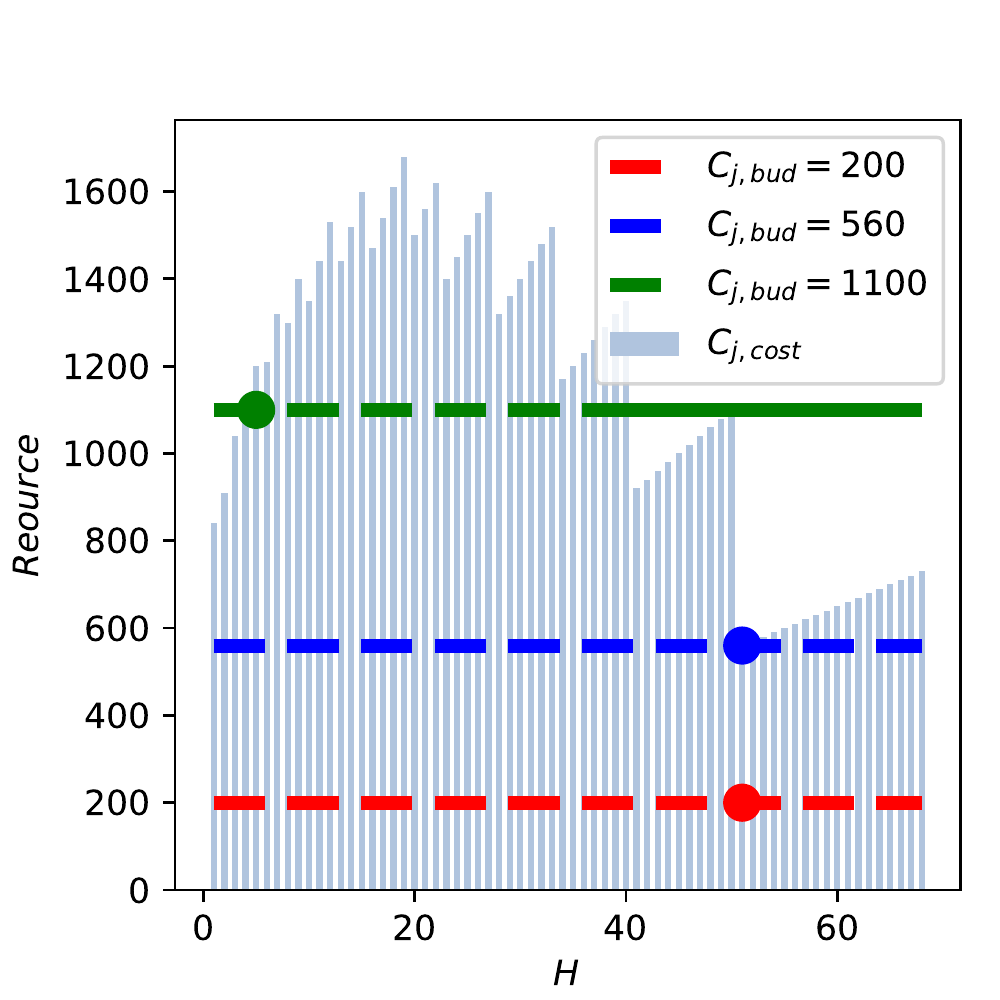}
    }
    \caption{BS iteration number $K$ and resource cost $C_{j,\text{cost}}$ until convergence target $\epsilon_D$ over different settings of $H_b = H$, $ b \in \{ 1, \ldots, N\} $.}
    \label{fg:cost}
\end{figure}

As given in Theorem \ref{thm:thetamin}, the left side of \eqref{eq:rH} represents the needed resource cost to meet the convergence target $\epsilon_D$ \eqref{eq:duality_gap} under certain number of terminal iterations.
In case where $N=5$, $N_b=5$, $\epsilon_D=0.01$, we demonstrate in Fig.~\ref{fg:cost} the numerical relationship between the BS iteration number and the resource cost with respect to a unanimous terminal iteration number $H_b = H$, $ b \in \{ 1, \ldots, N\} $.
Accordingly, with limited resource budget $C_{j,\text{bud}}$, there exists three possible cases and they are represented by horizontal lines of different colors in Fig.~\ref{fg:cost} where the solid lines means $C_{j,\text{bud}} > C_{j, \text{cost}}$ while dashed lines means $C_{j,\text{bud}} < C_{j, \text{cost}}$. Besides, the dot indicates the value of chosen $H$.
\begin{enumerate}
\item The given resource budget $C_{j,\text{bud}}$ is too limited to converge, just as shown as the red line in Fig.~\ref{fg:cost}. In this case, RHFedMTL capably traverses all the range of $H_b$ to minimize the resource cost of terminals (i.e., to be most resource efficient) and reduce the number of BS iterations to meet the limited resource budget.
\item The given resource budget $C_{j,\text{bud}}$ is just enough to converge, just as shown as the blue line in Fig.~\ref{fg:cost}. In this case, RHFedMTL capably traverses all the feasible range of $H_b$ to satisfy \eqref{eq:rH}, making a balance between learning speed and efficiency. Correspondingly, it sets the BS iteration number according to \eqref{eq:k}.
\item The given resource budget $C_{j,\text{bud}}$ leads to a resource surplus, just as shown as the green line in Fig.~\ref{fg:cost}. In this case, RHFedMTL can choose a appropriately smaller valid number of $H_b$, so as to fully leverage the resource budget. Notably, the adaption of $H_b$ also implies the ability to cope with the straggler issue.
\end{enumerate}

Finally, under the RHFedMTL framework in Algorithm \ref{al:rhfedmtl}, we present the \textsc{LocalDualMethod} in Procedure \ref{localresource}.

\section{Simulation and Numerical Results}\label{Simulation}

In this part, we illustrate the performance of the proposed RHFedMTL algorithm.
Consistent with Mocha \cite{mocha}, we adopt a dataset consisting of mobile phone accelerometer and gyroscope data collected from 30 individuals for human activity recognition \cite{activity}.
Furthermore, we attempt to classify the sitting gesture and the other activities from the provided feature vectors (i.e., time and frequency domain variables), and consider the data from different individuals as different tasks. Specifically each task contains $70$ pieces of data. In other words, there exist at most $30$ tasks, and more tasks could bring enhanced regulation as in \eqref{eq:r} and \eqref{eq:r2}. Meanwhile, the training dataset of one task is distributed among terminals while the test dataset is stored in BS for performance evaluation.

\begin{table}[b]
    \centering
    \caption{Notations and Default Settings.}
    \label{tb:simNote}
    \begin{tabular}{lrr}
        \toprule
        Notations           & Description                 & Default Setting \\
        \midrule
        \textit{H}          & No. of Terminal Iterations   & $2$               \\
        $C_{j, \text{bud}}$ & Resource Budget             & $1,400$            \\
        $C_{j, \text{dev}}$ & Cost for Terminal Iteration & $0.1$             \\
        $C_{j, \text{BS}}$  & Cost for BS Iteration       & $10$              \\
        $N$                 & No. of Tasks                 & $5$    \\
        $N_b$               & No. of Terminals Under One Task& $5$\\
        $\gamma$            & The Smoothness of Loss Function & $1$\\
        $\lambda_1$         & Weight of Self Regulation& $10^{-4}$\\
        $\lambda_2$ & Weight of Multi-Task Regulation& $10^{-6}$\\
        \bottomrule
    \end{tabular}
\end{table}

\begin{figure}[htb]
    \includegraphics[width=0.45\textwidth]{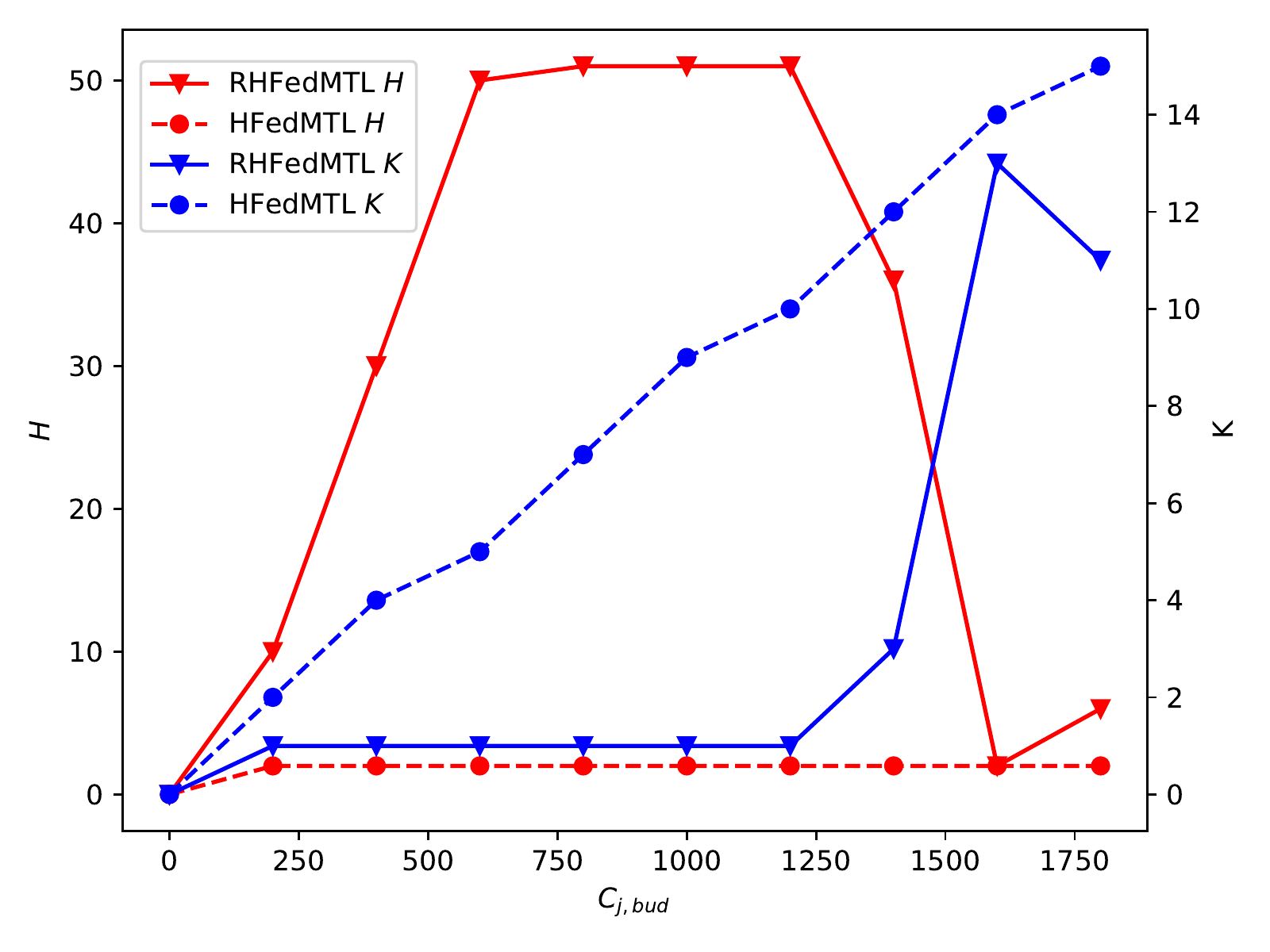}
    \caption{Chosen $H$ and $K$ under different resource budget $C_{j,\text{bud}}$.}
    \label{fg:HK}
\end{figure}

\begin{figure}[htb]
    \centering
    \subfigure[$C_{j,\text{bud}} = 400$]{
        \includegraphics[width=0.22\textwidth]{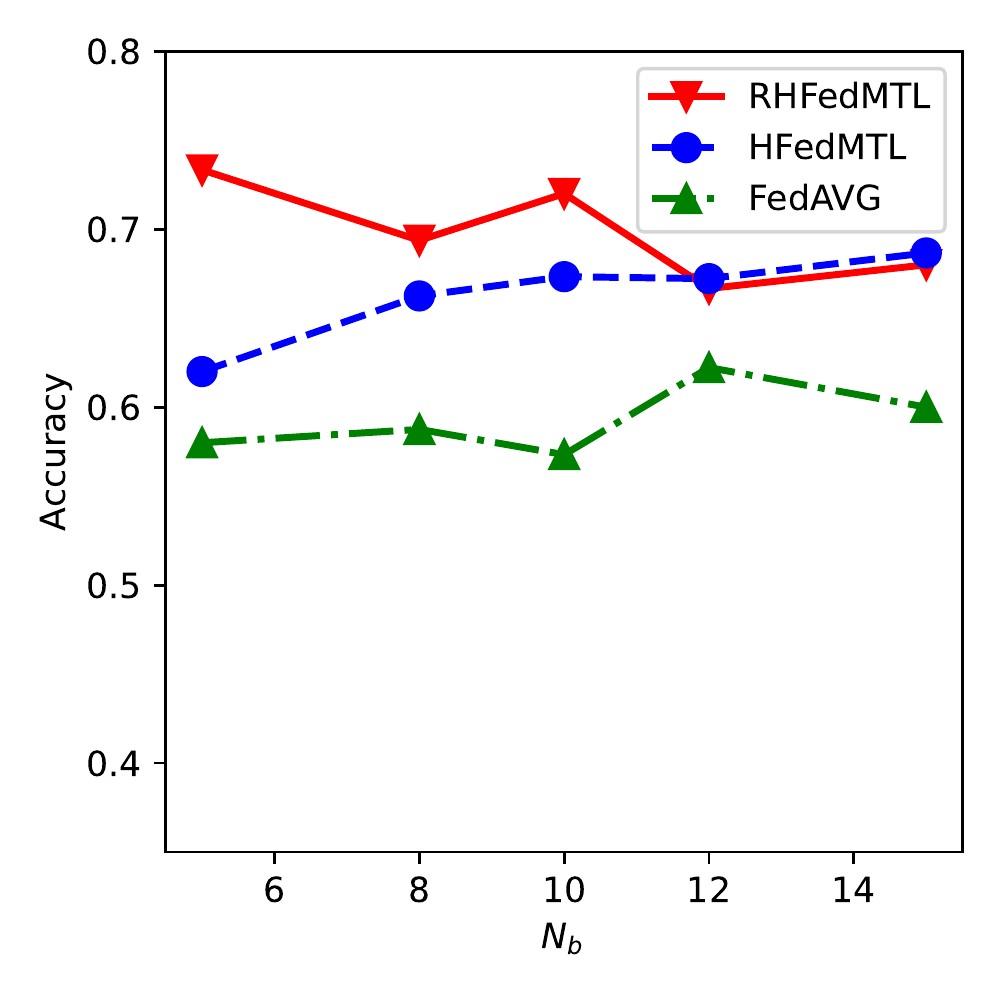}
    }
    \subfigure[$C_{j,\text{bud}} = 1,400$]{
        \includegraphics[width=0.22\textwidth]{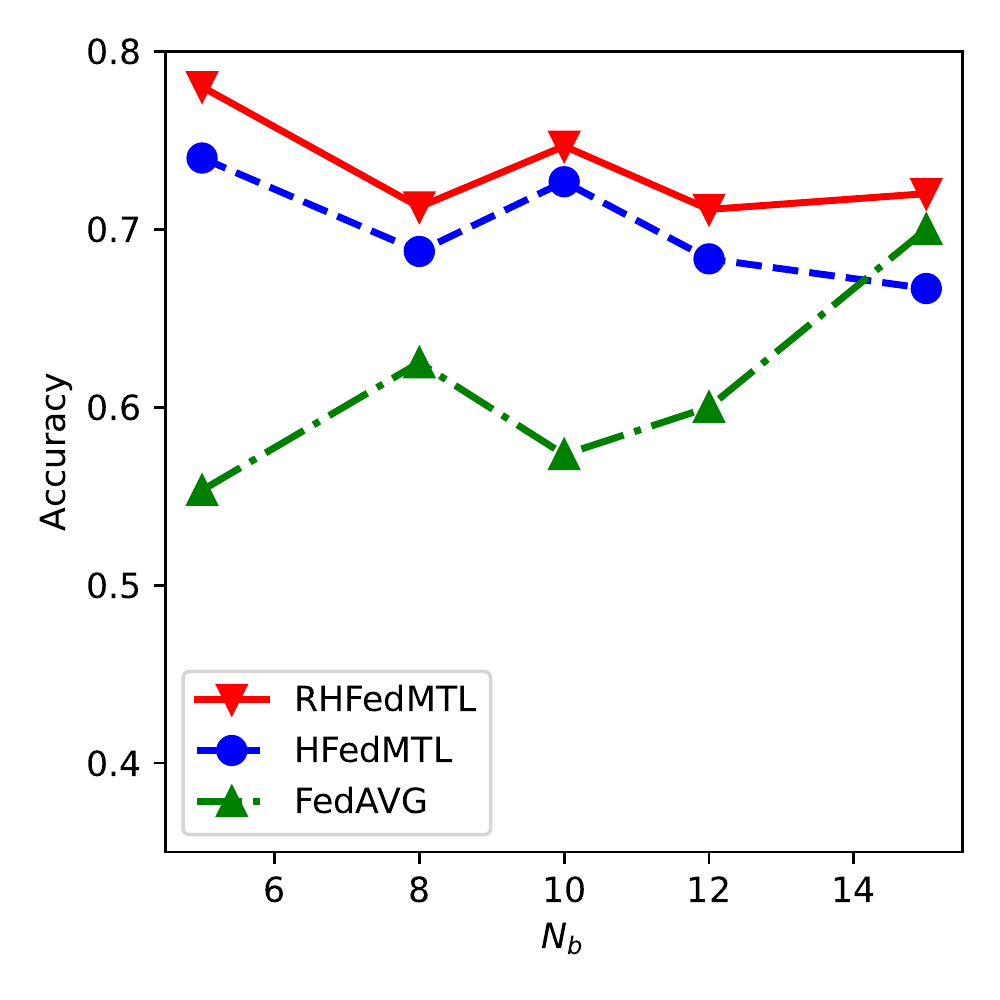}
    }
    \caption[short]{Model accuracy versus different terminal number $N_b$.}
    \label{fg:terminal}
\end{figure}

\begin{figure}[htb]
    \centering
    \subfigure[$C_{j,\text{bud}} = 400$]{
        \includegraphics[width=0.22\textwidth]{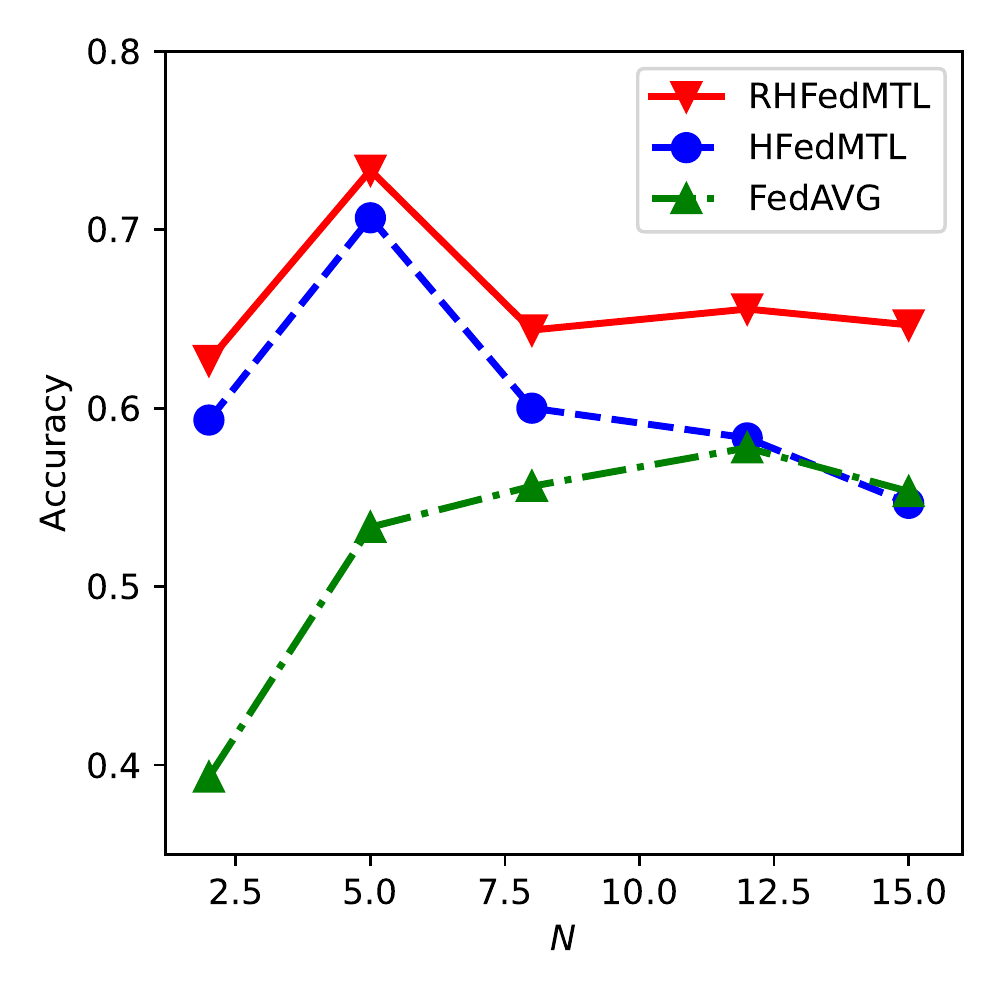}
    }
    \subfigure[$C_{j,\text{bud}} = 1,400$]{
        \includegraphics[width=0.22\textwidth]{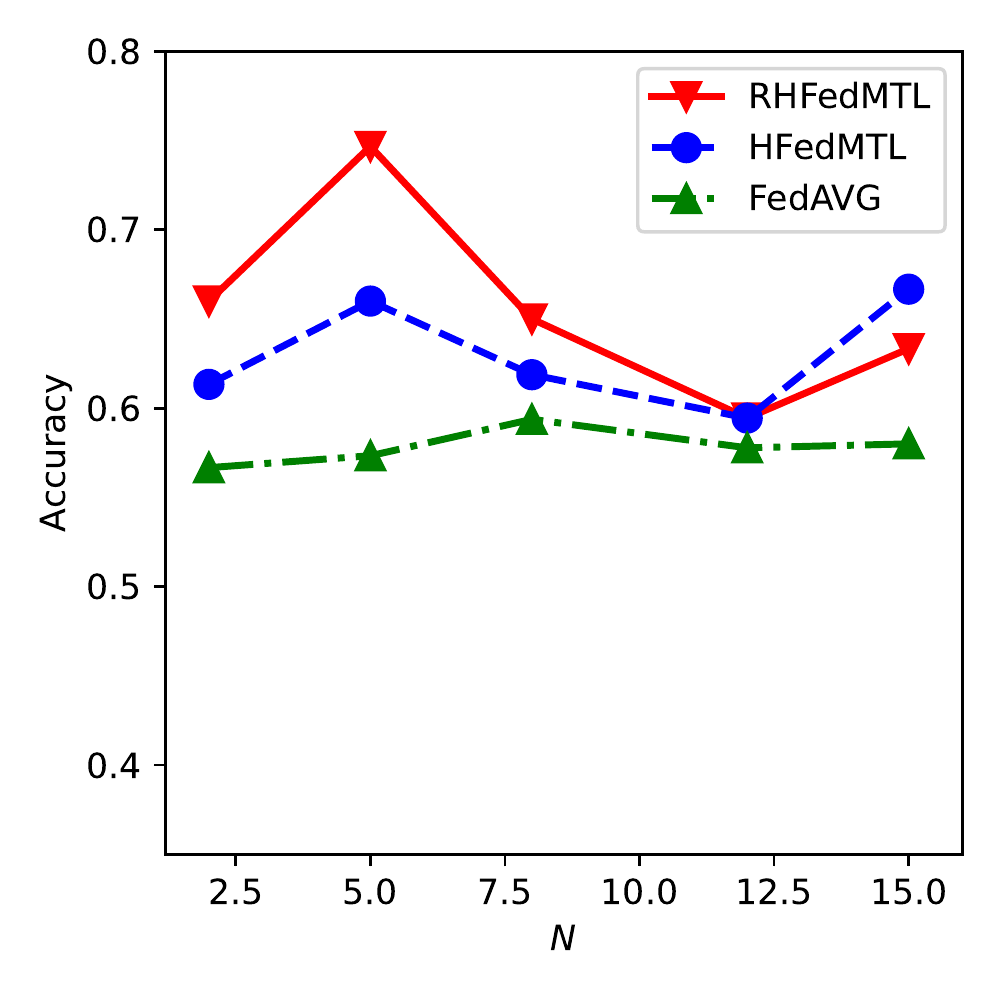}
    }
    \caption[short]{Model accuracy with respect to different task number $N$.}
    \label{fg:N}
\end{figure}

\begin{figure}[htb]
    \centering
    \subfigure[$C_{j,\text{bud}} = 400$]{
        \includegraphics[width=0.22\textwidth]{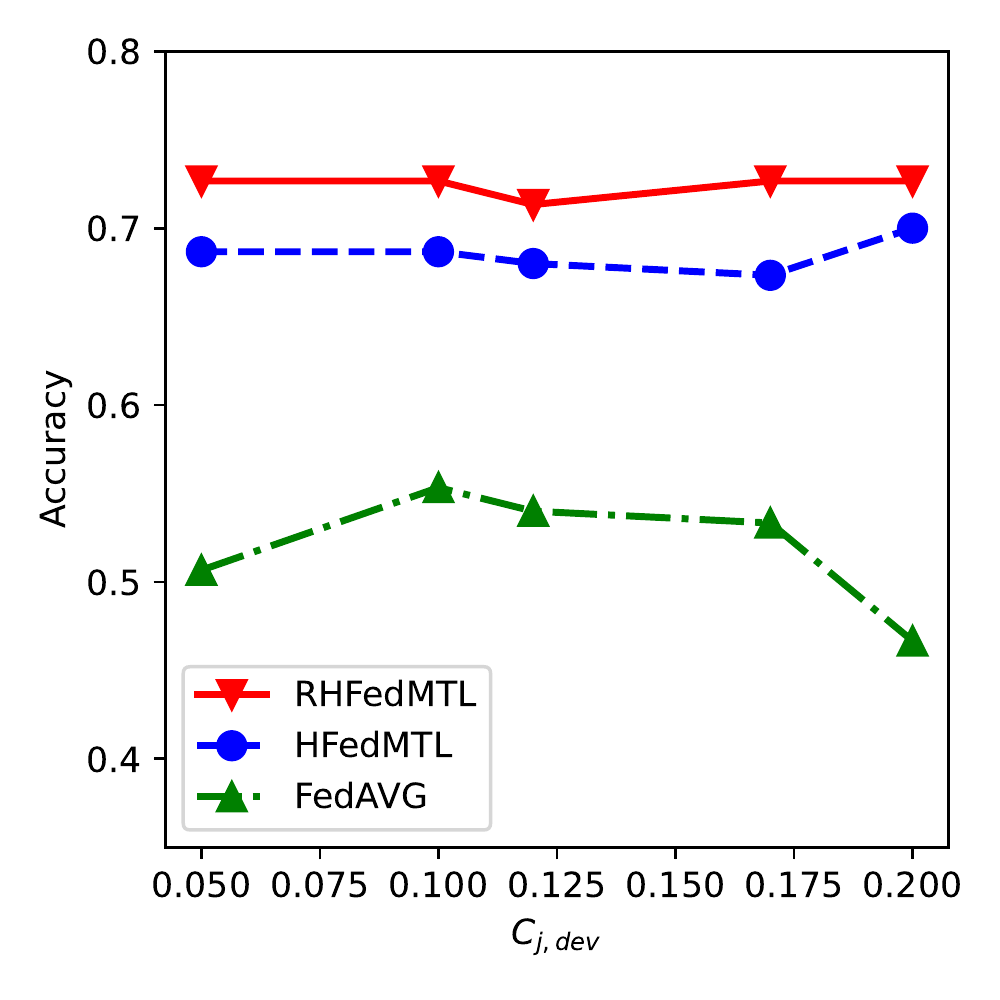}
    }
    \subfigure[$C_{j,\text{bud}} = 1,400$]{
        \includegraphics[width=0.22\textwidth]{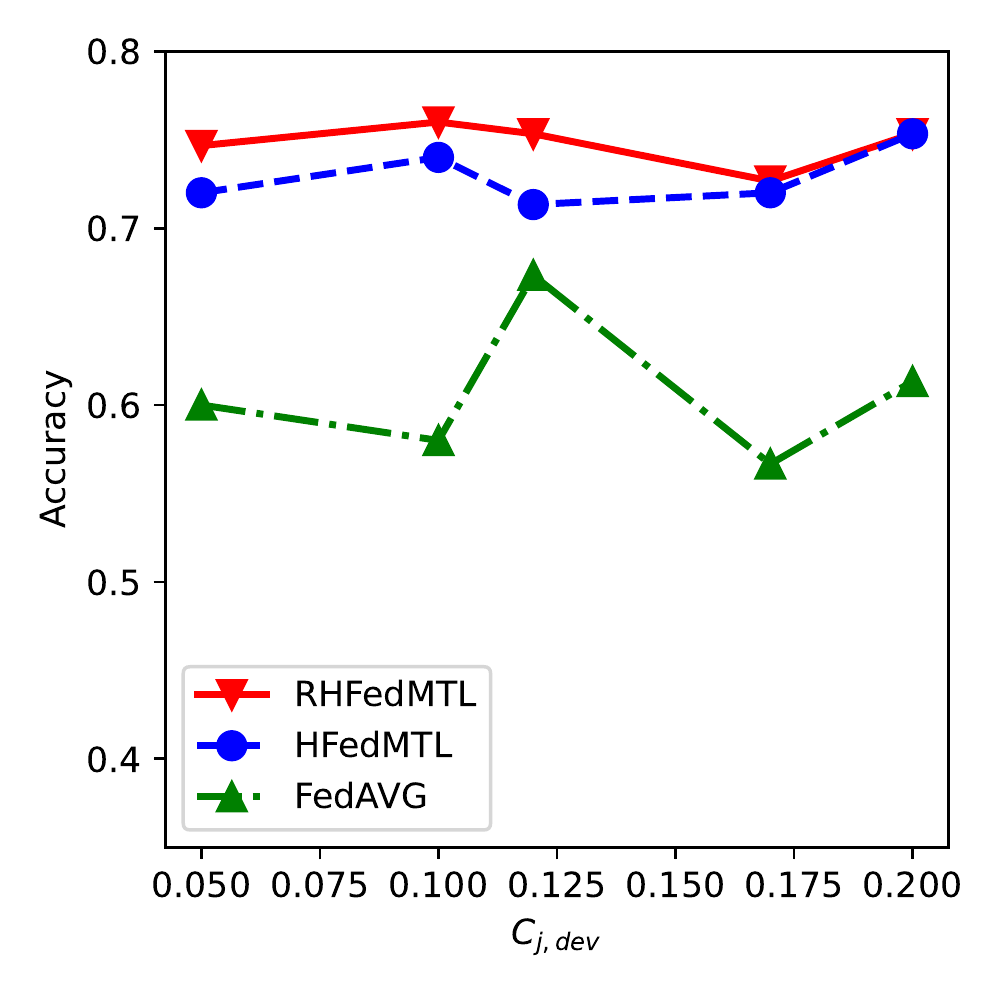}
    }
    \caption[short]{Model accuracy over terminal iteration cost $C_{j,\text{dev}}$.}
    \label{fg:Cdev}
\end{figure}

\begin{figure*}[htb]
    \centering
    \subfigure[RHFedMTL]{
        \includegraphics[width=0.31\textwidth]{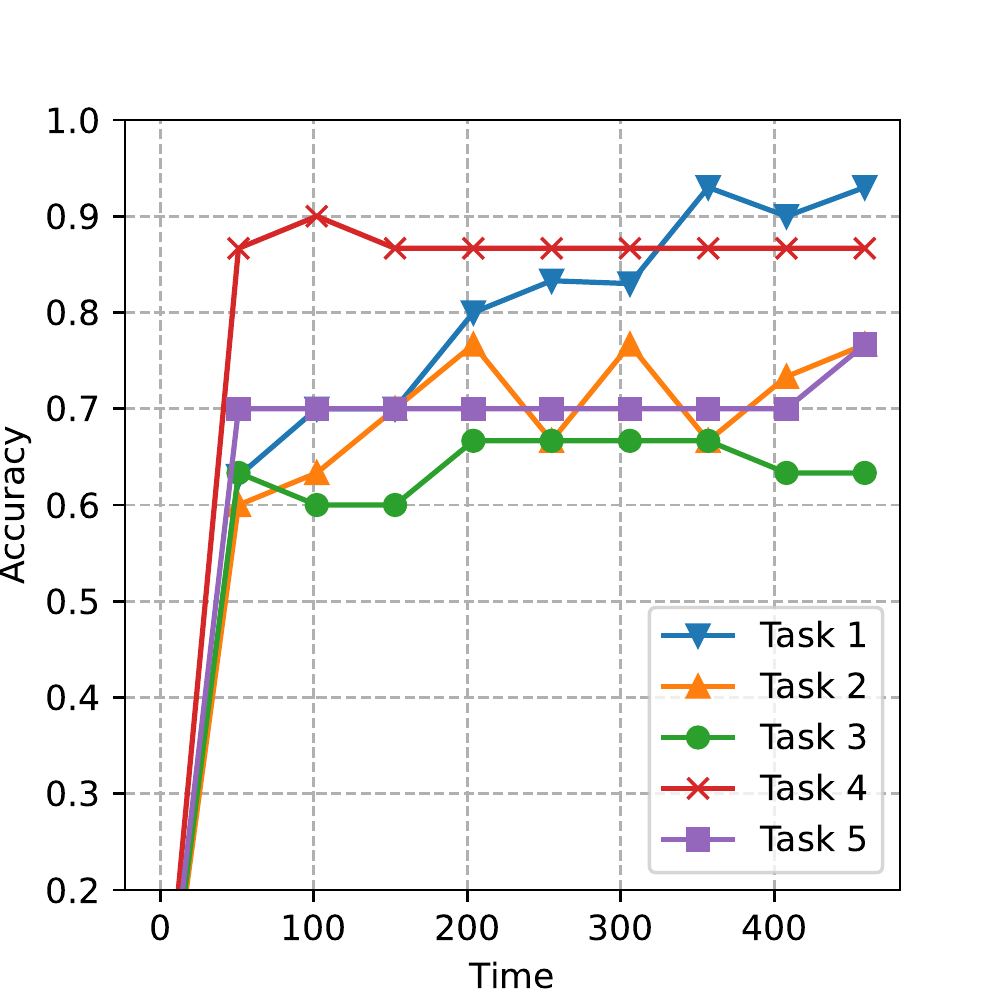}
    }
    \subfigure[HFedMTL]{
        \includegraphics[width=0.31\textwidth]{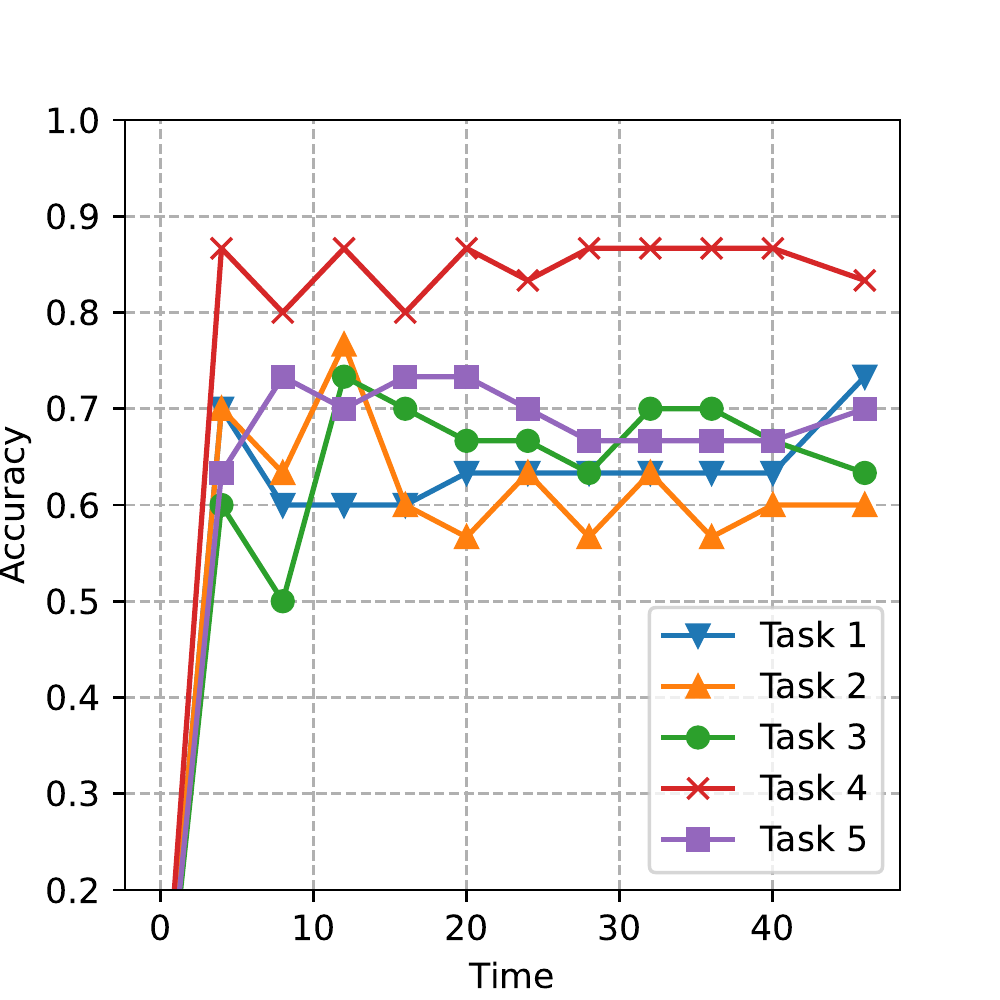}
    }
    \subfigure[FedAVG]{
        \includegraphics[width=0.31\textwidth]{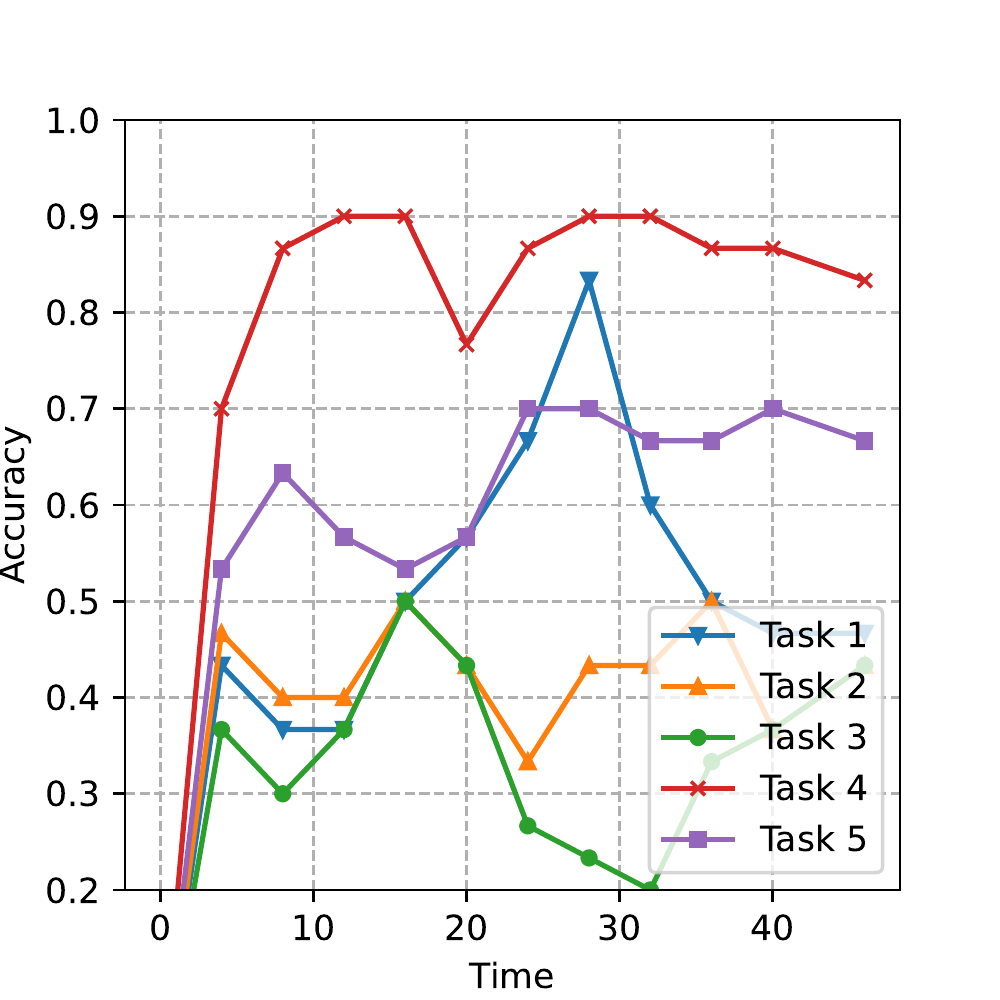}
    }
    \caption[short]{Task accuracy with $N = 5$, $C_{j,\text{bud}} = 1,400$.}
    \label{fg:acc}
\end{figure*}
\begin{table*}[htb]
    \centering
    \caption{Model accuracy with different resource budget under different local terminals.}
    \label{tb:acc}
    \begin{tabular}{|c|ccc|ccc|ccc|}
        \toprule
                           & \multicolumn{3}{c|}{RHFedMTL}           & \multicolumn{3}{c|}{HFedMTL}    & \multicolumn{3}{c|}{FedAVG}                                                                                                                                                                 \\ \hline
        $C_{j,\text{bud}}$ & \multicolumn{1}{c|}{\textit{$N_b = 5$}} & \multicolumn{1}{c|}{$N_b = 10$} & $N_b = 15$                  & \multicolumn{1}{c|}{$N_b = 5$} & \multicolumn{1}{c|}{$N_b = 10$} & $N_b = 15$ & \multicolumn{1}{c|}{$N_b = 5$} & \multicolumn{1}{c|}{$N_b = 10$} & $N_b = 15$ \\ \midrule
        $200$                & \multicolumn{1}{c|}{$0.70$}               & \multicolumn{1}{c|}{$0.70$}       & $0.68$                        & \multicolumn{1}{c|}{$0.68$}      & \multicolumn{1}{c|}{$0.65$}       & $0.67$       & \multicolumn{1}{c|}{$0.54$}      & \multicolumn{1}{c|}{$0.55$}       & $0.58$       \\ \hline
        $400$                & \multicolumn{1}{c|}{$0.73$}               & \multicolumn{1}{c|}{$0.72$}       & $0.68$                        & \multicolumn{1}{c|}{$0.62$}      & \multicolumn{1}{c|}{$0.67$}       & $0.69$       & \multicolumn{1}{c|}{$0.58$}      & \multicolumn{1}{c|}{$0.57$}       & $0.60$       \\ \hline
        $600$                & \multicolumn{1}{c|}{$0.72$}               & \multicolumn{1}{c|}{$0.71$}       & $0.70$                        & \multicolumn{1}{c|}{$0.63$}      & \multicolumn{1}{c|}{$0.72$}       & $0.70$       & \multicolumn{1}{c|}{$0.54$}      & \multicolumn{1}{c|}{$0.53$}       & $0.64$      \\ \hline
        $800$                & \multicolumn{1}{c|}{$0.72$}               & \multicolumn{1}{c|}{$0.72$}       & $0.70$                        & \multicolumn{1}{c|}{$0.71$}      & \multicolumn{1}{c|}{$0.70$}       & $0.71$       & \multicolumn{1}{c|}{$0.61$}      & \multicolumn{1}{c|}{$0.53$}       & $0.63$       \\ \hline
        $1,000$               & \multicolumn{1}{c|}{$0.73$}               & \multicolumn{1}{c|}{$0.72$}       & $0.71$                        & \multicolumn{1}{c|}{$0.65$}      & \multicolumn{1}{c|}{$0.71$}       & $0.68$       & \multicolumn{1}{c|}{$0.59$}      & \multicolumn{1}{c|}{$0.55$}       & $0.62$       \\ \hline
        $1, 200$               & \multicolumn{1}{c|}{$0.77$}               & \multicolumn{1}{c|}{$0.73$}       & $0.71$                        & \multicolumn{1}{c|}{$0.68$}      & \multicolumn{1}{c|}{$0.67$}       & $0.70$       & \multicolumn{1}{c|}{$0.56$}      & \multicolumn{1}{c|}{$0.54$}       & $0.64$       \\ \hline
        $1,400$               & \multicolumn{1}{c|}{$0.78$}               & \multicolumn{1}{c|}{$0.75$}       & $0.72$                        & \multicolumn{1}{c|}{$0.74$}      & \multicolumn{1}{c|}{$0.73$}       & $0.67$       & \multicolumn{1}{c|}{$0.55$}      & \multicolumn{1}{c|}{$0.57$}       & $0.70$       \\ \hline
        $1,600$               & \multicolumn{1}{c|}{$0.74$}               & \multicolumn{1}{c|}{$0.71$}       & $0.68$                        & \multicolumn{1}{c|}{$0.70$}      & \multicolumn{1}{c|}{$0.72$}       & $0.69$       & \multicolumn{1}{c|}{$0.55$}      & \multicolumn{1}{c|}{$0.60$}       & $0.72$       \\ \bottomrule
    \end{tabular}
\end{table*}

In order to express the overall performance of the system more intuitively,
we calculate the overall accuracy by averaging the accuracy of all separate tasks on their test dataset under various settings, and then dive into the real-time task performance of all separate tasks lately in Fig.~\ref{fg:acc}.
For ease of representation and interpretation of results, we primarily consider $J= 1 $ type of resources and consider the energy as the single resource type in our experiments. The resource cost includes computational cost for terminal iteration (i.e., local training process) and BS iteration (i.e., aggregation and transmission), consistent with \eqref{eq:problem}.
We compare RHFedMTL with vanilla HFedMTL and classical methods including FedAVG~\cite{fedavg}. 
Since RHFedMTL can adjust its terminal iteration number while others can not, we default the terminal iteration number $H$ for HFedMTL and FedAVG to be 2.
The system parameters are summarized in Table \ref{tb:simNote}.

We first experiment with the relationship between terminal iteration number $H_b$ and BS iteration number $K$ needed to reach target duality gap in \eqref{eq:duality_gap}.
For simplicity of representation, we set $H \triangleq \min_b{H_b}$ as shown in \eqref{eq:rH} and set $H_b$ to be the same as $H$ for all BSs.
The experiment results are shown in Fig. \ref{fg:cost}.
Consistent with our discussions above (e.g., \eqref{eq:Theta} and \eqref{eq:rH}), the terminal iteration number $H$ and BS iteration number $K$ are negatively correlated while the relationship between $H$ and system resource cost is not always monotonic.
Therefore, if the value of $C_{j, \text{BS}}$ is relatively large, the reduction of $K$ can compensate the resource cost due to the increased the terminal iteration number $H$. 
Numerical results of chosen $H$ and $K$ under different resource budget $C_{j,\text{dev}}$ is shown in Fig. \ref{fg:HK}.
Due to the lack of the capability for resource awareness, HFedMTL adopts a fixed terminal iteration number and can only increase the BS iteration number to drain up the resource budget. But, as further verified in Fig.~\ref{fg:terminal} and Table \ref{tb:acc}, RHFedMTL can adjust $H$ and $K$ simultaneously to reach better performance under given resource budget. Moreover, given a limited resource budget, it can be observed from Fig. \ref{fg:HK}, RHFedMTL gives priority to increase the terminal iteration number since it costs less resources
and gradually increases the BS iteration number to yield superior performance as the resource budget grows.

As for Fig.~\ref{fg:terminal}, since the dataset under one task is limited, the number of data stored in one terminal decreases as terminal number $N_b$ grows.
Therefore, with a smaller local dataset size for terminals, RHFedMTL leads to a relatively smaller range of $H$ and produces performance similar to HFedMTL (i.e., similar $H$ and $K$) in Fig.~\ref{fg:HK}, it can be observed from Fig.~\ref{fg:terminal} and Table \ref{tb:acc} that the performance gap between HFedMTL and RHFedMTL decreases as $N_b$ grows. Similarly, under a resource-abundant case, the performance gap of RHFedMTL and HFedMTL will also decrease since resource budget is no longer the bottleneck to inhibit the learning performance.

In Fig.~\ref{fg:N}, we compare the model accuracy with different task numbers (i.e., different BS numbers).
Benefiting from the multi-task regulation, the performance of RHFedMTL and HFedMTL increase along with the increase of the number of tasks.
However, a over-larger task number will diminish the performance gain, given the complication to learn the relationship among multiple tasks. On the other hand, Fig. \ref{fg:Cdev} examines the impact of terminal iteration cost $C_{j,\text{dev}}$.
Consistent with our intuition, the excessive cost of terminal iteration $C_{j,\text{dev}}$ will decrease the model accuracy.
From the figures above, RHFedMTL and HFedMTL shows their robustness under various settings and RHFedMTL demonstrates their superiority.

In Fig.~\ref{fg:acc}, we plot the accuracy of each task separately.
In this simulation, the resource budget $C_{j,\text{bud}}$ is $1,400$ while the task number $N$ is $5$.
Since the resource budget is limited, RHFedMTL chooses a bigger terminal iteration number $H$ to make the system more cost-efficient.
Therefore, the BS iteration number $K$ of RHFedMTL will be smaller accordingly and leads to fewer data points as shown in Fig.~\ref{fg:acc}, as the task accuracy is evaluated after each BS iteration.
Again, RHFedMTL and HFedMTL can always reach a higher accuracy than FedAVG and facilitate the learning of other tasks.
As already discussed, compared with HFedMTL, RHFedMTL yields better performance.

Finall, we conduct more performance sensitivity analysis and testify the model accuracy of RHFedMTL with respect to different regulation parameter $\lambda_1$ and $\lambda_2$. As shown in Fig.~\ref{fg:weight},
 for the self-regulation parameter $\lambda_1$, the model accuracy will decrease, since it will loss the functionality of self regulation if we set the value of $\lambda_1$ too trivial; while a over-large value of $\lambda_1$ can potentially degrade the performance as well. 
Similar phenomena can also be observed in results corresponding to multi-task regulation parameter $\lambda_2$. Therefore, we set the default value of $\lambda_1$ and $\lambda_2$ to be $10^{-4}$ and $10^{-6}$, respectively.

\begin{figure}[tb]
    \centering
    \subfigure[$\lambda_1$]{
        \includegraphics[width=0.22\textwidth]{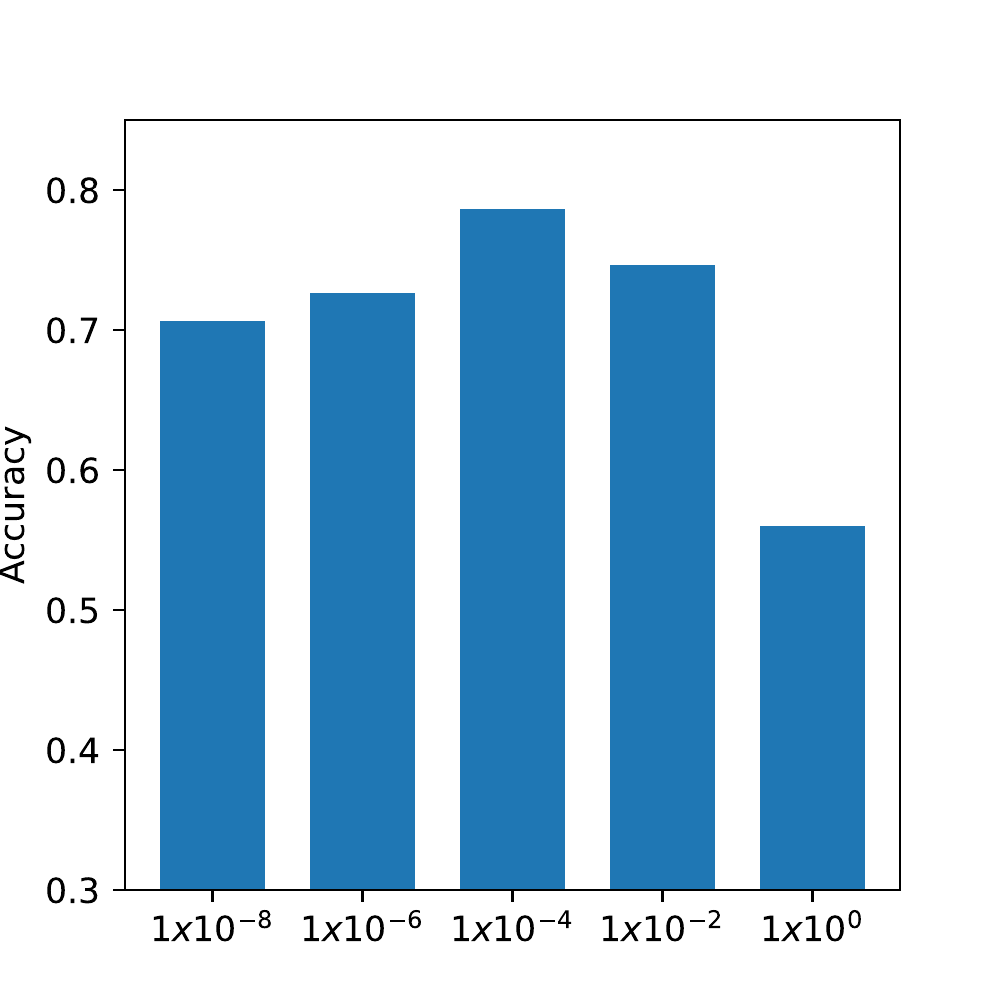}
    }
    \subfigure[$\lambda_2$]{
        \includegraphics[width=0.22\textwidth]{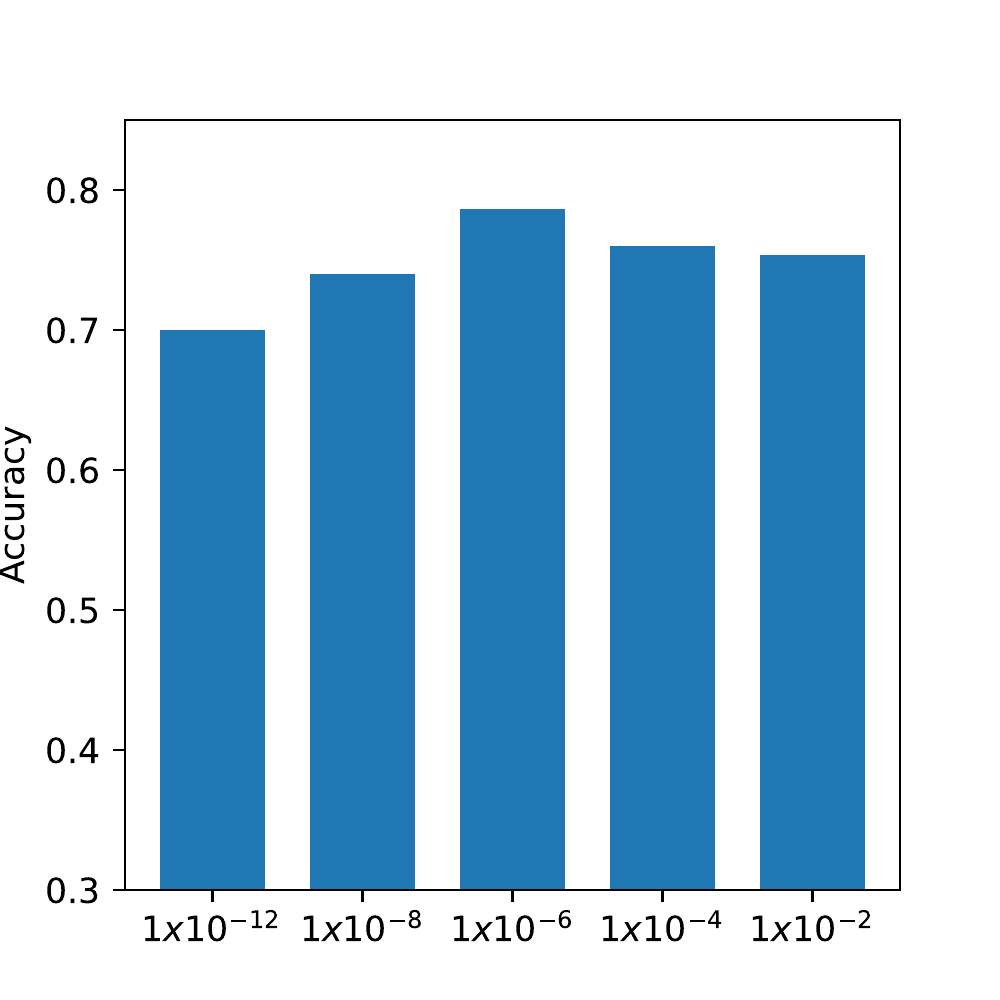}
    }
    \caption[short]{Model accuracy of RHFedMTL with different settings of regulation parameters.}
    \label{fg:weight}
\end{figure}
\section{Conclusion}
In this paper, we have addressed the importance of network AI and proposed an RHFedMTL framework based on the primal-dual method towards tackling federated MTL problems with stragglers. In particular, 
RHFedMTL has taken the hierarchy of cellular works into consideration and encompassed a three-tier iteration mechanism including server iteration, BS iteration and terminal iteration.
Moreover, the primal-dual method SDCA has been leveraged to effectively transform the coupled multi-task learning into some local optimization problems within BSs.
We have analyzed the convergence bound of the proposed framework, and derived a guiding relationship between terminal iteration and BS iteration. Afterwards, we have developed a resource-aware learning strategy for local terminals and BSs to obtain more satisfactory learning performance under given resource budget.
Extensive experimentation results have demonstrated the effectiveness and robustness of the proposed method.

\appendix[Proof of Lemma \ref{lemmaDt}]
    \label{proof_lemmaDt}
    \begin{proof}
        Recalling the relationship $y_{b}^i \triangleq \boldsymbol{w}^\intercal_{b} \boldsymbol{x}_{b}^i$ in \eqref{eq:mtl0}, the Lagrangian dual of problem could be derived as follows:
        \begin{align}
              & D (\boldsymbol{\alpha} ) \notag                                                                                                                                                                                                                                                                              \\
            = & \inf_{\mathbf{W}, y_{b}^i} \frac{1}{N} \sum_{b=1}^N  \bigg\{ \frac{\lambda_1}{2} \| \boldsymbol{w}_{b} \Vert^2 + \frac{\lambda_2}{2} \left\lVert \boldsymbol{w}_{b}  - \boldsymbol{r}_b\right\rVert^2  \bigg. \notag                                                                                         \\
              & \bigg.  + \frac{1}{n_b}\sum_{i=1}^{n_b} \Big( \mathcal{L}(y_{b}^i) + {\alpha}_{b}^i(y_{b}^i-\boldsymbol{w}_{b} ^\intercal \boldsymbol{x}_{b} ^i) \Big) \bigg\}  \notag                                                                                                                                               \\
            = & \inf_{y_{b,t}^i} \frac{1}{N} \sum_{b=1}^N  \frac{1}{n_b} \sum_{i=1}^{n_b} \Big( \mathcal{L}(y_{b}^i) +  {\alpha}_{b}^i y_{b}^i \Big)  + \inf_\mathbf{W} \frac{1}{N} \sum_{b=1}^N  \notag                                                                                                                     \\
              & \Big(  \frac{\lambda_1}{2} \| \boldsymbol{w}_{b} \Vert^2 + \frac{\lambda_2}{2} \left\lVert \boldsymbol{w}_{b}  - \boldsymbol{r}_b\right\rVert^2 - \frac{1}{n_b} \sum_{i=1}^{n_b} {\alpha}_{b}^i \boldsymbol{w}_{b} ^\intercal \boldsymbol{x}_{b} ^i \Big) \notag                                                     \\
            = & - \frac{1}{N} \sum_{b=1}^N  \frac{1}{n_b} \sum_{i=1}^{n_b} \sup_{y_{b}^i} \Big(- {\alpha}_{b}^i y_{b,t}^i - \mathcal{L}(y_{b}^i) \Big) + \inf_\mathbf{W} \frac{1}{N} \sum_{b=1}^N  \notag                                                                                                                    \\
              & \Big( \frac{\lambda_1}{2} \| \boldsymbol{w}_{b} \Vert^2 + \frac{\lambda_2}{2} \left\lVert \boldsymbol{w}_{b}  - \boldsymbol{r}_b\right\rVert^2  - \frac{1}{n_b} \sum_{i=1}^{n_b} {\alpha}_{b}^i \boldsymbol{w}_{b} ^\intercal \boldsymbol{x}_{b} ^i \Big)  \notag                                                    \\
            = & \frac{1}{N} \sum_{b=1}^N  \frac{1}{n_b} \sum_{i=1}^{n_b} -\mathcal{L}^*(- {\alpha}_{b}^i ) + \notag                                                                                                                                                                                                          \\
              & \inf_\mathbf{W} \frac{1}{N}   \sum_{b=1}^N  \Big( \frac{\lambda_1}{2} \| \boldsymbol{w}_{b} \Vert^2 + \frac{\lambda_2}{2} \left\lVert \boldsymbol{w}_{b}  - \boldsymbol{r}_b\right\rVert^2 - \frac{1}{n_b} \sum_{i=1}^{n_b} {\alpha}_{b}^i \boldsymbol{w}_{b} ^\intercal \boldsymbol{x}_{b} ^i \Big) \label{eq:dual}
        \end{align}
        Since the optimization of $\boldsymbol{w}_b$ in \eqref{eq:dual} is independent from the models in other tasks, after taking the gradient of $\boldsymbol{w}_b$ in the second term and setting it to zero, for any model $\boldsymbol{w}_b$ we would have:
        \begin{equation}\label{eq:wt}
            \begin{split}
                \boldsymbol{w}_{b}  &= \frac{1}{\lambda_1 + \lambda_2} \Big( \lambda_2 \boldsymbol{r}_b + \frac{1}{n_b} \sum_{i=1}^{n_b} {\alpha}_{b}^i \boldsymbol{x}_{b} ^i \Big). \\
            \end{split}
        \end{equation}
        Substituting the chosen $\boldsymbol{w}_b$ in \eqref{eq:wt} into \eqref{eq:dual}, we get the lemma.
    \end{proof}
\bibliographystyle{IEEEtran}
\bibliography{cite}
\end{document}